\documentclass[reprint,longbibliography]{revtex4-1} %reprint
\usepackage{amssymb,amsmath}
\usepackage{graphicx,import}
\usepackage{placeins}
\usepackage{hyperref}
\usepackage{color}
\usepackage{longtable}
\usepackage{amsthm}
\usepackage{algorithm}
\usepackage{algpseudocode} 
\usepackage{upgreek}
\usepackage[nameinlink,poorman]{cleveref}

\graphicspath{{./}} 
\setlength{\parskip}{20em} 

% #############################################################################
% newtheorem and amsmath options 
% #############################################################################
% Define format for defintions
\newtheorem{theorem}{Theorem}%[section]
% set counters to be the same as theorem

\newtheorem{corollary}[theorem]{Corollary}
\newtheorem{lemma}[theorem]{Lemma}
\newtheorem{proposition}[theorem]{Proposition}

\usepackage{enumitem}
\newlist{thmprop}{enumerate}{2}
\setlist[thmprop]{label={\normalfont(\roman*)},ref=(\roman*)}

% #############################################################################
% cleverref options
% #############################################################################

\crefname{equation}{Eq.}{Eqs.}
\crefname{figure}{Fig.}{Figs.}
\crefname{table}{Table}{Tables} 
\crefname{section}{Section}{Sections}
\crefname{chapter}{Chapter}{Chapters}
\crefname{appendix}{Appendix}{Appendices}
\crefname{algorithm}{Algorithm}{Algorithms}

\crefname{theorem}{Theorem}{Theorems}
\crefname{defn}{Definiton}{Definitions}
\crefname{azm}{Assumption}{Assumptions}
\crefname{corollary}{Corollary}{Corollaries}
\crefname{lemma}{Lemma}{Lemmas}
\crefname{thmprop}{property}{properties}
\crefname{proposition}{Proposition}{Propositions}
\crefname{remark}{Remark}{Remarks}

% #############################################################################
%  GLOBAL PAPER NOTATION
% #############################################################################

% #############################################################################
%  Notation (for this paper)
% #############################################################################

% Axiomatic probability and non linear filtering notation
\newcommand{\ex}[1]{\mathbb{E}[#1]} %mathmode only - expectation values
  %mathmode only - sample space Omega 
  %mathmode only - family of events 
\newcommand{\measure}[2]{\mathbb{P}_{#1}{[#2]}}  %mathmode only - measure P
\newcommand{\ofieldgen}[1]{\sigma(#1)}  %mathmode only - sigma-field generated #1

\newcommand{\erf}[0]{\mathrm{erf}}
\newcommand{\ofield}[0]{$\sigma$-field} % text mode

% #############################################################################
%  Notation (Quantised Sensors)
% #############################################################################

% NOTATION
 % cont. amplitude noisy signal 
 %  error variable for a mid tread quantiser for classial msmts
\newcommand{\rhoq}[1]{\rho_{#1}} % error variable for a coin-flip quantiser for qubit msmts

% COMMANDS
 % #1 - w.r.t to variable ; #2 - order of derivative - use \text{}for order ==1

% ---------------------------------------------------------------------------------------
% CHANGED NOTATION OR COMMANDS FROM PREDICTIVE CONTROL TO CRLB PAPER
% ---------------------------------------------------------------------------------------
% \newcommand{\ex}[1]{\mathbb{E} \left[ #1 \right]}

% ---------------------------------------------------------------------------------------
% OLD NOTATION + COMMANDS FOR KF PREDICTIVE CONTROL PAPER
% ---------------------------------------------------------------------------------------

% Define state  and bayes prediction risk
 %mathmode only
%\langle (\state_n - \hat{\state}_n)^2 \rangle_{f, \mathcal{D}}} %mathmode
 
% Define Dirac and Pauli notation
\newcommand{\ket}[1]{| #1 \rangle} %mathmode only
\newcommand{\bra}[1]{\langle #1 |} %mathmode only 
 %mathmode only - operator hat
 %mathmode only - pauli operators
 %mathmode only - identity
 %mathmode only - projection operator
 %mathmoe only - state projection

\begin{document}

\title{Adaptive filtering of projective quantum measurements using discrete stochastic methods}

\author{Riddhi Swaroop Gupta} 
\email{riddhi.sw@gmail.com}
\altaffiliation[Present address: ]{IBM Quantum, IBM Australia Research Labs, Southbank, Victoria 3006, Australia}
\affiliation{ARC Centre of Excellence for Engineered Quantum Systems, School of Physics, The University of Sydney, New South Wales 2006, Australia}

\author{Michael J. Biercuk}
\affiliation{ARC Centre of Excellence for Engineered Quantum Systems, School of Physics, The University of Sydney, New South Wales 2006, Australia}

\begin{abstract}  
Adaptive filtering is a powerful class of control theoretic concepts useful in extracting information from noisy data sets or performing forward prediction in time for a dynamic system.  The broad utilization of the associated algorithms makes them attractive targets for similar problems in the quantum domain. To date, however, the construction of adaptive filters for quantum systems has typically been carried out in terms of stochastic differential equations for weak, continuous quantum measurements, as used in linear quantum systems such as optical cavities. Discretized measurement models are not as easily treated in this framework, but are frequently employed in quantum information systems leveraging projective measurements.  This paper presents a detailed analysis of several technical innovations that enable classical filtering of discrete projective measurements, useful for adaptively learning system-dynamics, noise properties, or hardware performance variations in classically correlated measurement data from quantum devices. In previous work we studied a specific case of this framework, in which noise and calibration errors on qubit arrays could be efficiently characterized in space; here, we present a generalized analysis of filtering in quantum systems and demonstrate that the traditional convergence properties of nonlinear classical filtering hold using  single-shot projective measurements. These results are important early demonstrations indicating that a range of concepts and techniques from classical nonlinear filtering theory may be applied to the characterization of quantum systems involving discretized projective measurements, paving the way for broader adoption of control theoretic techniques in quantum technology.
\end{abstract}

\maketitle
\section{Introduction}

Quantum computers in the NISQ-era face considerable challenges in mitigating the effects of decoherence on intermediate-scale multi-qubit devices. In realistic operating environments subject to noise, difficulties arise in device calibration, control and error mitigation and the complexity of these challenges increases with system size. As the number of qubits on a device increases, existing calibration and control techniques typically lead to an infeasible resource overhead at the expense of available compute time. In overcoming these contemporary challenges, insights from classical inference and control engineering literature generally appear to be relevant. Indeed, contemporary classical techniques for quantum systems characterization  \cite{GoogleSupremacy,lennon2019efficiently,tranter2018multiparameter}, adaptive tomography \cite{ferrie2014quantum,granade2015characterization,wiebe2015can,granade2015accelerated,granade2017practical}, and parameter estimation \cite{granade2012robust,stenberg2014efficient,kimmel2015robust,rudinger2017experimental} add to a growing body of literature in the last decade which has focused on realizing inexpensive and scalable characterization and control methods. Classical protocols have also been used for implementing optimal or efficient experiments \cite{huszar2012adaptive,kravtsov2013experimental} by enabling adaptive measurement selection or qubit allocation \cite{majumder2020real,gupta2020integration}. 

However, applying concepts from classical control engineering to quantum systems is not straightforward due to the peculiar role of measurement in quantum mechanics. This complication has been typically addressed by focusing on weak measurement of quantum systems captured via continuous stochastic differential equations \cite{carmichael2009open,geremia2003quantum,wiseman2009quantum}. In contrast, when considering projective measurement records, continuous stochastic filtering methods are no longer applicable as a quantum state is reset after each projective measurement.  Instead, many approaches for analyzing discrete measurements rely on rapid averaging or batch post-processing single-shot raw data \cite{proctor2020detecting,bravyi2020mitigating}.  The conversion of discretized measurement outcomes into continuous variables through these means ultimately discards useful time-domain information and adds a computational bottleneck that unnecessarily slows state-estimation algorithms.  Despite the importance of discrete measurement analysis in a wide range of applications for the characterization, calibration, and control of quantum systems,  single-shot projective measurements have not yet been directly incorporated into classical filtering techniques for quantum control. In particular, one requires a measurement model that is quantum mechanically accurate but also correctly captures the statistical properties of discrete observations of an otherwise continuous state-space. 

In this work we rigorously demonstrate how adaptive filtering incorporating quantum projective measurements can be understood through the theoretical framework of classical nonlinear filtering, and describe in full a set of algorithmic modifications enabling their use.  First, we demonstrate an efficient computational technique to discretize the amplitude domain of a classical signal in a manner that preserves statistical compatibility with classical filtering theory; this is achieved by combining Born's rule with an appropriate \textit{ansatz} for measurement noise in accordance with classical amplitude quantization. Second, we solve the resulting inference problem using a sequential Monte Carlo framework called particle filtering. Here, continuous probability distributions over a state space are approximately solved by a collection of discrete particles that undergo non-linear transformations. We introduce a novel set of rules to perform particle transformations in a manner compatible with single qubit projective measurements. 

With these modifications, we numerically validate that discretizing the continuous amplitudes of classical random signals, and performing discrete approximations to continuous probability distributions permits the convergence properties of classical nonlinear filtering carry over to classical filtering of discretized single-qubit projective measurements. Building on the experimental demonstration of adaptive spatial dephasing-field characterization first presented in Ref. \cite{gupta2020adaptive}, we use simulations to study the true error scaling coefficient with particle number. This error scaling coefficient is associated with the rate at which discrete empirical distributions tend to the true, continuous Bayesian posterior as the number of particles increase, thereby providing a numerical characterization of the convergence behaviour of the filter. We probe the properties of our proposed filter using two different types of numerical tests. First, our desired operation, where we run our novel particle filter using single-shot projective measurements. Here, the error scaling coefficient with particle number agrees with theoretically expected values from classical convergence analysis, which is unanticipated for the novel modifications introduced here for projective measurements. Second, we break our filter by uniformly randomizing (discarding) some state information before receiving the next single-shot projective measurement for every iteration of the filter. As the overall injection of random information progressively increases, the error scaling coefficient gradually increases above theoretically anticipated values. These empirical studies provide evidence that our methods approximate the true classical filtration generated by a sequence of projective measurements, and demonstrate model-robustness for a range of challenging operating conditions. 

\begin{figure}[t!]
 \centering
	\includegraphics[scale=.9]{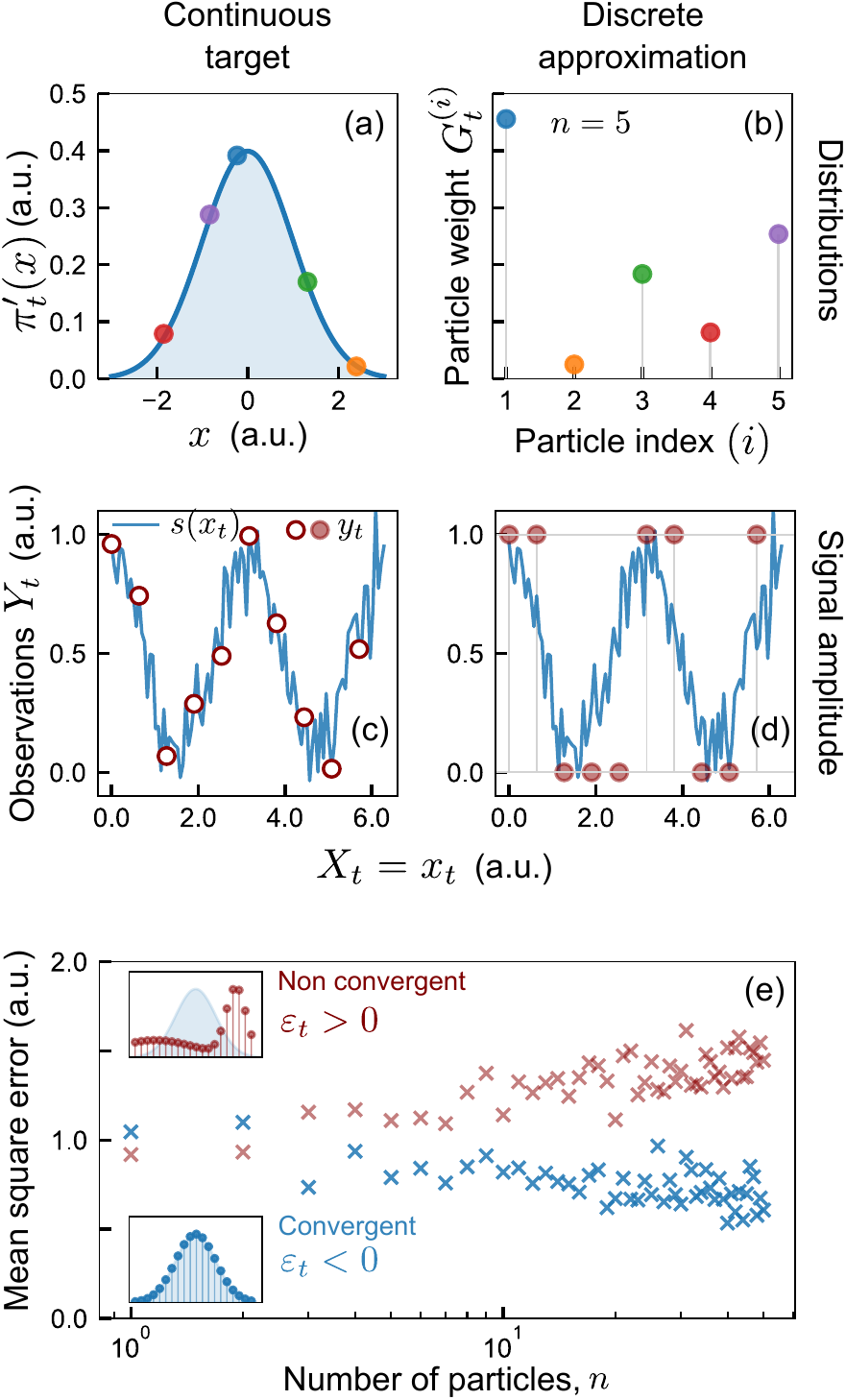}
	\caption[Schematic representation of continuous classical quantities and their discrete approximations]{\label{fig_schematic} Schematic overview of discrete approximations to continuous, classical quantities and their convergence behaviour. (a) Illustration of a true continuous probability density $\partial\pi_t/ \partial x $ with respect to a state space $x \in \mathbb{S}_X$ in 1D (blue solid line). Discrete particle positions (colored markers) for $n=5$ correspond to particle indices in (b).  (b) Illustrative plot of particle weights $G_t^{(i)}$ vs. particle index $i$ for some fixed $t$ computed from the particle filter.  Weights  are normalised, $\sum_{i=1}^n G_t^{(i)} = 1$, and the limit $n \to \infty$ recovers continuous distributions from empirical approximations in the mean-square limit. (c) Idealised depiction of discrete-time observations (red open circles) of a continuous signal (blue solid). (d) Discretization of signal amplitude in (c) into two levels yields binary measurements (red solid markers). Both discrete-time and discrete-amplitude properties of the signal are emphasized by vertical and horizontal grey lines respectively. (e) Schematic representation of a semilog plot of mean-square errors $\mathcal{L}_t$ vs. number of particles as $n$ increases for fixed but sufficiently large $t$. Convergent (non convergent) behaviour in blue (red) crosses corresponds to the extent of overlap of weighted particle positions and the true continuous distribution in the lower inset (upper inset). Error scaling coefficient $\varepsilon_t$  describes rate of change of errors as $n\to \infty$ for each $t$ and negative values for $\varepsilon_t$ indicate convergent behaviour.}
\end{figure} 

The manuscript is presented in the following parts. In \cref{sec:pf} we discuss the use of particle filters as a discrete approximation to continuous probability distributions in classical inference and their utility in solving difficult Bayesian inference problems. In \cref{sec:qsystems}, we outline how the quantum mechanical Born's rule can be combined with insights from classical discrete signal analysis so that individual projective measurement outcomes can be analyzed by classical filtering techniques. Subsequently, we  show that classical convergence properties of particle filters are retained even if discrete, projective measurements are used. In \cref{sec:nmqa}, we consider adaptive filtering with single-shot projective measurements, first presented in Ref.~\cite{gupta2020adaptive}, now discussed with greater generality and a focus on the convergence properties of filtering distributions in a general physical setting. Using the specific example of Ref.~\cite{gupta2020adaptive}, in \cref{sec:numericanalysis} we present numerical evidence for favourable convergence characteristics. Concluding remarks are provided in \cref{sec:Conclusion}. \FloatBarrier

\section{Particle filtering methods \label{sec:pf}}

Particle filters belong to a broader class of classical algorithms, known as sequential Monte Carlo algorithms, but have featured in quantum characterization and control applications. Quantum particle filters were developed in the context of continuous quantum measurements \cite{chase2009single}, while particle methods have also been used for adaptive Hamiltonian learning using projective measurements \cite{granade2012robust,granade2017structured}. Outside of quantum systems characterization, these methods have been popularized in nonlinear engineering control theory and probabilistic robotics, for example, in classical Simultaneous Localization and Mapping (SLAM) problems \cite{cadena2016past,durrant2006simultaneous,thrun2005probabilistic} where a robot must characterize (`map') and physically navigate through an unknown terrain. A common theme arising from these diverse applications is that particle filters perform strongly in high-dimensional, non-Gaussian and nonlinear state-spaces \cite{doucet2001introduction,candy2016bayesian,bergman1999recursive} that typically arise in context of characterising quantum systems.

The efficacy of these particle filtering methods in solving inference problems is due to their so-called particle branching mechanisms. These branching mechanisms are an essential part of assessing convergence, computational efficiency and correctness for a particle filter, irrespective of the specific details about measurement or system dynamics in any physical application. The subset of particle filters discussed here have extremely convenient convergence characteristics that can be exploited for designing algorithms for quantum control. In particular, a convenient convergence property is that the statistical behaviour of branching process determines the rate at which a particle filter converges to the true Bayesian posterior distribution as the number of particles increase \cite{bain2009}. Furthermore, these convergence characteristics do not place any major constraints on the dynamical evolution or measurement procedures for the system under consideration. This insight paves the way for using nonlinear classical filtering directly on discrete, single-shot outcomes obtained from quantum systems in a wide range of physical applications.

The key objective of any particle filter  is to approximate a true continuous Bayesian posterior distribution \cite{doucet2001introduction,candy2016bayesian,bergman1999recursive,murphy2000bayesian,poterjoy2016localized}. A true continuous Bayesian posterior distribution, denoted  $\pi_t$, is the conditional probability of observing $X_t$ given a set of measurements $Y_{0:t}$. The distribution $\pi_t$ is expressed as the conditional probability of $X_t$ given the \ofield{} generated by the observations $Y_{0:t}$.  In general, a transformation from $X_t \to Y_t$ is nonlinear, and in case of single-qubit measurements, the binary nature of $Y_t \in \{0,1\}$ further makes it difficult or impossible to derive an analytical filter update using measurement data.  

\begin{table}[h]
\centering
 \begin{tabular}{ | l | p{7.5cm} |} 
 \hline
 Sym. & Definition \\ \hline 
 $\mathbb{S}$ & A complete, separable metric (state) space for a R.V. \\
 $\mathcal{S}$ & The Borel $\sigma$-algebra \cite{szekeres2004course} generated by $\mathbb{S}$\\
 $C(\mathbb{S})$ & The space of real continuous functions on $\mathbb{S}$ \\
	$M(\mathbb{S})$ & The space of $\mathcal{S}$-measurable functions on $\mathbb{S}$ \\ % \to \mathbb{R} \\
	$B(\mathbb{S})$ & The space of bounded $\mathcal{S}$-measurable functions on $ \mathbb{S}$\\
	$C_b(\mathbb{S})$ & The space of bounded continuous functions on $ \mathbb{S}$\\
	$P(\mathbb{S})$ & The space of probability measures on $(\mathbb{S}, \mathcal{S})$ s.t. $\mu \in P(\mathbb{S})$ satisfies $\mu(\mathbb{S})=1$ \\
 \hline
 \end{tabular}
\caption{\label{table:notation} State, function and measure space notation (consistent with \cite{bain2009}). The abbreviation R.V. stands for any random variable and $\mathcal{S} \equiv \ofieldgen{\mathbb{S}}$ \emph{i.e.} the $\sigma$-field generated by the state space $\mathbb{S}$.}
\end{table} 
In the particle filtering approximation, the $i$-th particle represents a hypothesis about $X_t=x_t^{(i)}$, known as the `position' of the particle in $\mathbb{S}_X$, the state-space associated with $X_t$ (refer \cref{table:notation}). The collection of particle positions represents the empirical approximation to $\pi_t$. This approach permits a mechanism by which a filtering algorithm may be applied in order to obtain a numerical estimate of the posterior distribution by directly transforming particles at each iteration, rather than seeking analytical solutions using algebraic inversions or decomposition methods.  This discrete approximation, $ \pi_t^n$, for the true $\pi_t$, is expressed as
\begin{align}
 \pi_t^n &:= \frac{1}{n}\sum_{i=1}^n \delta_{x, x_t^{(i)}}, \label{eqn:25:main}
\end{align} where $n$ represent the total number of particles $\{x_t^{(i)}\}_{i=1}^n$, and each particle represents a hypothesis for $X_t$. In the above, the Kronecker delta, $\delta_{x, (\cdot)}$, is used because the approximate probability measures represent discrete probability distributions. This discrete approximation to a continuous distribution is schematically depicted in \cref{fig_schematic}(a), where a set of $n=5$ discrete particle positions are illustrated as points on a continuous probability density by colored circular markers. 

We now provide an overview of the particle-filtering algorithm.  During filtering, particles $\{x_t^{(i)}\}$ are transformed by operations which represent dynamical or measurement processes, represented by $K_t$ and the likelihood function $g_t$ respectively. In general the transformations represented by $K_t$ and $g_t$ are nonlinear, and the resulting transformed particles need not resemble the forms of analytic probability distributions. Under the additional assumption that $X$ is Markov, one uses the transition kernel for a Markov chain to obtain the distribution at $t$ if the distribution at $t-1$ is known. The result is called the predictive probability measure, $p_t$, or equivalently, the dynamical model for the filtering problem if drift characterization is relevant to a system under consideration. Thus, Bayes rule for the conditional probability of $X_t$ given observations $Y_{0:t}$ is written in the typical recursive form as,
\begin{align}
	\pi_t &:= g_t * p_t (A) := \frac{\int_A g_t(x) dp_t(x) }{p_t g_t }, \label{eqn:22:main}\\
	p_t &:= K_{t-1} \pi_{t-1}, p_t \in P(\mathbb{S}_X), \label{eqn:23:main}\\
	p_t g_t &:= \int_{\mathcal{S}_X} g_t(x) dp_t(x) > 0 \label{eqn:24:main}
\end{align} The use of the projective product in the first line, $*$, is essentially a restatement of Bayes rule. While the distribution $p_t$ is the true continuous predictive distribution, it can also be approximated by individually transforming particles in an empirical distribution, $ p_t^n := K_{t-1}\pi_{t-1}^n$. 

For each incoming measurement at $t$, a particle weight, denoted $G_t^{(i)}$, is computed for all $i=1,2, \hdots n$ particles. These weighted particles, $\bar{x}_t^{(i)}$, form the weighted distribution $ \bar{\pi}_t$, expressed as
\begin{align}
 \bar{\pi}_t^n &:= \sum_{i=1}^n G_t^{(i)} \delta_{x, \bar{x}_t^{(i)}}, \quad \bar{x}_t^{(i)} \sim p_t^n. \label{eqn:26m}
\end{align} Here, the particle weight, $G_t^{(i)}$ represents the probability of receiving a measurement $Y_t=y_t$ if the hypothesis captured by the $i$-th particle $X_t= \bar{x}_t^{(i)}$ is taken to be true. The bar notation $\bar{\cdot}$ indicates that the distribution should be computed after evolving particles from $t-1$ into the current iteration at $t$, and $G_t^{(i)}$ are calculated based on a single measurement, $Y_t$, received at $t$. As a schematic illustration, for each colored circular marker in \cref{fig_schematic}(a), the corresponding particle weight vs. particle index $i = 1, 2 \hdots n$ is shown in \cref{fig_schematic}(b).

The frequency with which a particle position is represented in the next generation at $t+1$ is proportional to its particle weight. Here, the weighted empirical distribution  of \cref{eqn:26m} is sampled according to the distribution of the particle weights $\{ G_t^{(i)} \}$. The resulting off-spring particles form posterior $\pi^n_t$; equivalently, the prior distribution for the iteration at $t+1$ and weights are reset to uniform. Over many iterations, particles with higher weights are represented more frequently in the particle populations enabling empirical particle distributions to gradually converge to the true, continuous Bayesian posterior. In words, as $n\to \infty$, the set of weights in \cref{fig_schematic}(b) approximate panel (a) for the schematic introduced earlier.  

The efficacy of any particle filter is linked to particle branching processes. These depend on how particle weights $\{ G_t^{(i)} \}$ are calculated whenever new measurement information is received, and how particles are propagated from one iteration $t$ to the next $t+1$ via re-sampling.  Branching mechanisms therefore form the core algorithmic representation of the approach to providing an approximate solution to the Bayesian inference problem. Many different particle branching processes have been proposed in particle filtering literature \cite{li2015resampling,beevers2007fixed,grisettiyz2005improving,godsill2001improvement}. We focus on branching processes satisfying \cref{prop:banchingproperties} which benefit from a number of well-understood convergence characteristics \cite{bain2009}. One of these characteristics is that the correctness of a particle filter can be analyzed as convergence to the true Bayesian posterior distribution as the number of particles $n$ increase for any $t$,
\begin{align}
    (\pi_t^n)_{n=1}^\infty \to \pi_t. \label{eqn:cstatement}
\end{align} This equation describes the convergence of a particle filter to the true Bayesian posterior. The arrow schematically depicted in \cref{eqn:cstatement} can be interpreted as `convergence in expectation' and `almost-sure convergence' of the empirical to the true Bayesian distributions. Both these forms of convergence describe the expected distance between two probability distributions, where the first focuses on distances between moments of a distribution and the second describes overall convergence between distributions, subject to several technical considerations \cite{bain2009}.  

A schematic illustration of different convergence behaviour is given in \cref{fig_schematic}(e). Here, the square error between the means of the target and estimated conditional distribution of  $X_t$ given data $Y_{0:t}$ increases (decreases) with particle number $n$ for a divergent (convergent) particle filter, as depicted in red (blue) crosses.  The consideration of higher order moments of the posterior distribution is omitted from this schematic figure. Upper and lower insets plot particle locations (circular markers) against a target continuous distribution (shaded  blue) for divergent vs. convergent particle filtering respectively, illustrating differences between the target and estimated distribution of  $X_t$ given data $Y_{0:t}$ during particle filtering.

We now explore the technical considerations under which convergent particle filtering can be guaranteed. To achieve convergence during particle filtering applications, the two important technical conditions are that $g_t$ is continuous and bounded and $K_t$ is Feller \cite{bain2009}. Once these conditions are satisfied, a particle filter possessing branching properties of \cref{prop:banchingproperties} is guaranteed to have both convergence in expectation and almost-sure convergence to the true Bayesian posterior as $n$ increases, as detailed in \cref{app:bkg:0}.

\begin{proposition} \label{prop:banchingproperties}
	Let $i$ denote a particle label with $i=1, 2, \hdots n$, $G_t^{(i)}$ denote a particle weight, and $\xi^{(i)}$ denote the frequency of a particle position at $t$. Branching mechanisms for a particle filter satisfy \cite{bain2009}:
	\begin{enumerate}
		\item Constant particle number $n = \sum_{i=1}^{n} \xi^{(i)}$ for all $t$.
		\item Conditional mean proportional to $G_t^{(i)}$, that is: $\ex{ \xi^{(i)} | \bar{\mathcal{G}}_t } = n G_t^{(i)}$.
		\item Conditional covariance matrix $(A^n_t)_{ij} := \ex{(\xi^{(i)} - n G_t^{(i)})^T (\xi^{(j)} - n G_t^{(j)}) | \bar{\mathcal{G}}_t }$ satisfy $q^T A^n_t q_t \leq nc_t$ for some constant $c_t$ and for any $n$ dimensional vector $q$ with entries $|q^{(i)} |< 1$.
	\end{enumerate}
\end{proposition}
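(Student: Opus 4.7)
The plan is to verify that the three listed conditions hold for a canonical branching scheme, namely multinomial resampling, in which the offspring counts $(\xi^{(1)},\ldots,\xi^{(n)})$ are sampled, conditionally on the \ofield{} $\bar{\mathcal{G}}_t$ generated by the weighted empirical measure, from a multinomial distribution with $n$ trials and success probabilities $(G_t^{(1)},\ldots,G_t^{(n)})$. Other standard resampling schemes (residual, systematic, stratified) can then be checked by analogous calculations; they typically attain strictly smaller conditional variance than multinomial resampling, so the bound in (iii) continues to hold with the same constant $c_t$.

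Property (i) is immediate from the defining constraint of the multinomial law, which fixes $\sum_{i=1}^{n}\xi^{(i)}=n$ almost surely; this is a deterministic consequence of the joint distribution, not a probabilistic assertion. Property (ii) reduces to the marginal-mean identity for the multinomial: since each $\xi^{(i)}$ is marginally binomial with parameters $(n,G_t^{(i)})$, we obtain $\ex{\xi^{(i)}\mid\bar{\mathcal{G}}_t}=nG_t^{(i)}$ directly.

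Property (iii) is the main step. For multinomial resampling, the conditional covariance matrix has the explicit form
\begin{equation*}
(A^n_t)_{ij} = n G_t^{(i)}\bigl(\delta_{ij}-G_t^{(j)}\bigr).
\end{equation*}
For any vector $q$ with $|q^{(i)}|<1$, a direct computation gives
\begin{equation*}
q^T A^n_t q = n\sum_{i=1}^{n} G_t^{(i)}(q^{(i)})^2 - n\biggl(\sum_{i=1}^{n} G_t^{(i)} q^{(i)}\biggr)^2 \leq n\sum_{i=1}^{n} G_t^{(i)}(q^{(i)})^2,
\end{equation*}
where the inequality uses non-negativity of the dropped square. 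Since $|q^{(i)}|<1$ and the weights are normalized with $\sum_i G_t^{(i)}=1$, the right-hand side is bounded by $n$, so (iii) holds with $c_t=1$.

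The main obstacle is not in any single moment computation but in the structural step of confirming that, under the chosen branching mechanism, the joint offspring distribution is $\bar{\mathcal{G}}_t$-conditionally multinomial; once that is in place, (i)--(iii) become standard identities. A secondary subtlety is that $c_t$ is permitted to depend on $t$: in the multinomial analysis above the constant is $t$-independent, but for variance-reduced schemes whose offspring law depends on the particular weight pattern, one would need a uniform-in-$t$ argument showing that the worst-case quadratic form still scales linearly in $n$ with a $t$-independent prefactor, which follows from the same Cauchy--Schwarz-type bound used above.
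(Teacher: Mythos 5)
Your proposal is correct and follows essentially the same route as the paper, which establishes these properties via the lemma in \cref{app:bkg:2} showing that multinomial offspring distributions satisfy all three conditions: the conditional mean $nG_t^{(i)}$, the explicit multinomial covariance $(A_t^n)_{ij}=nG_t^{(i)}(\delta_{ij}-G_t^{(j)})$, and the same quadratic-form bound $q^TA_t^nq\leq n\sum_i G_t^{(i)}(q^{(i)})^2\leq n$ yielding $c_t=1$. Your remarks on other resampling schemes are a harmless addition not present in the paper, but the core verification is identical.
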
 In \cref{prop:banchingproperties}, the quantity $\xi^{(i)}$ is the number of times the parent particle $x_t^{(i)}$ is copied and represented in the offspring generation of particles. The first proposition specifies that the total number of particles remains $n$ for all $t$ enabling a simpler analysis of the full branching random process from $t=0$ to $t$, \emph{i.e.} only the branching transitions within each $t$ need to be considered. The second proposition restates that empirical weight of the particle is associated with the true probability of observing that particle (state information) given some observed history via $\bar{\mathcal{G}}_t$. The last property places a constraint on the covariance matrix associated with the branching process. This constraint appears to have no \emph{a priori} justification, but it is a condition associated with a particle filter's convergence properties \cite{bain2009}.

In particular we focus on one aspect of convergence analysis which concerns the scaling behaviour of empirical distributions with particle number. This scaling behaviour can be associated with the behaviour of true errors generated during filtering irrespective of the system under consideration, and can be compared with the actual performance of particle filters in numerical simulations. Using \cref{prop:banchingproperties}, one can derive conditions on empirical filtering distributions as, 
\begin{align}
	 \ex{(p^n_tf - \pi_{t-1}^n (K_{t-1}f))^2} &\leq \frac{\parallel f \parallel_\infty^2}{n}, \label{eqn:31} \\
	\ex{(\pi_t^nf - \bar{\pi}_t^nf)^2} &\leq \frac{c_t \parallel f\parallel_\infty^2}{n}, \label{eqn:32} 
\end{align} for all $ f \in B(\mathbb{S}_X), t \geq 0$ (see \cref{table:notation}), where $\parallel f\parallel_\infty^2$ is a infinity norm for the function. These inequalities state that the expected distances between empirical distributions shrink as filtration proceeds.  The two different types of distances under consideration are firstly, from posterior distributions at $t-1$ to predictive distribution at $t$ (\cref{eqn:31}), and secondly, the empirical distribution before and after particle re-sampling within each $t$ (\cref{eqn:32}). The specific value of $c_t$  in \cref{eqn:31,eqn:32} depends on the type of branching process and its value cannot always be deduced \emph{a priori}. Supporting technical derivations for these equations and their relevance to the overall proofs for convergent particle filters is re-stated for completeness in \cref{app:bkg:0}. 

Of the branching processes satisfying \cref{prop:banchingproperties}, `bootstrap' filters are a popular example; an example algorithmic implementation is outlined in \cref{algorithm:bootstrap}. Here, particle weights are calculated only using the likelihood function $g_t(\cdot)$. In \cref{algorithm:bootstrap}, one sees that the empirical distributions of the bootstrap particle filter follow the progression,
\begin{align}
	K_{t-1}\pi_{t-1}^{n} \to \bar{\pi}_t^{n} \xrightarrow[n]{\text{re-sample}} \pi_t^{n} 	 \label{eqn:addin:1},
\end{align} and all empirical distributions have a constant particle number $n$. In the above, the first arrow represents computing particle weights using the likelihood function. The second arrow represents particle re-sampling as summarized by the last two lines of \cref{algorithm:bootstrap}. For bootstrap particle filters of \cref{algorithm:bootstrap}, the progression depicted in \cref{eqn:addin:1} is a multinomial branching process and it is theoretically tractable to show that $c_t=1$ (\cref{app:bkg:0}). 

\begin{algorithm}[H] % Floats not comptatible with revtek4-1. Need "[H] option. 
	\caption{Bootstrap}\label{algorithm:bootstrap}
	\footnotesize
	\begin{algorithmic}[0] 
		\If{$ t = 0$}
		\State Sample $x_{0}^{(i)} \sim \pi_0, \quad i = 1, 2, \hdots n$ 
		\EndIf
		\If{$ t > 0$}
		\State Sample $\bar{x}_{t}^{(i)} \sim K_{t-1}\pi_{t-1}^n i = 1, 2, \hdots n$
		\State Receive $Y_t=y_t$; compute $G_t^{( i )} = \frac{g_t^{y_t}(\bar{x}_{t}^{(i)})}{\sum_{i}^{n} g_t^{y_t}(\bar{x}_{t}^{(i)})}$
		\State Replace $\bar{x}_{t}^{(i)}$ with $\xi^{(i)}$ offspring, such that $n= \sum_{i=1}^n\xi^{(i)}$
		\State Re-label offspring as $x_{t}^{(i)}$; reset $G_t^{( i )} = 1/n$ for $i = 1, \hdots, n$.
		\EndIf
	\end{algorithmic}	
\end{algorithm}

In the next section we introduce a model for projective measurements on quantum systems that may be employed in a data inference problem in which we must learn or estimate system dynamics based on a measurement record. Subsequently, we will proceed to establish a central result of this manuscript - that the use of the likelihood function introduced in \cref{sec:qsystems} for projective measurements on quantum systems does not disrupt the essential convergence properties of these particle filters.

\section{Nonlinear filtering of single-qubit measurements \label{sec:qsystems}}

In many physical settings, it is often desired that some continuous-valued classical process $X$ is inferred from a discrete-time sequence of measurements. The challenge posed by single-qubit measurements is that measurements can only assume certain allowed values, $Y \in \{0,1\}$. This challenge that a continuous $X$ can only be observed as discrete outcomes $Y$ is well-known in classical literature as the quantization of signal amplitude. The key insight described in this section is that the combination of an analytic prescription of Born's rule with classical amplitude quantization theory to describe single-shot projective measurement outcomes provides compatibility with any classical filtering algorithm. 

In classical signal processing, it is often the case that the continuous amplitude of some classical process is measured by a sensor that can only record discrete amplitude levels. In this context, amplitude-quantization theory specifies the statistical properties of discrete measurements of an otherwise continuous-amplitude signal \cite{widrow1996}. The amplitude of a classical signal is said to be discretized by $B$ bits if its continuous-amplitude is measured by a physical sensor which only has $2^B$ discrete amplitude-levels, up to some constant offset. These classical amplitude-quantized signals are analyzed via sampling a signal in the amplitude domain leading to an increase in the overall noise floor \cite{widrow1996}. We illustrate the procedure of amplitude quantization in \cref{fig_schematic}(c)-(d). An example of a continuous-amplitude discrete-time classical signal is first shown in panel (c) as red open markers, corresponding to discrete-time noisy measurements of a continuous-time signal (blue solid). This signal further undergoes a classical discretization of signal amplitude, where the $y$-axis is discretized  into two discrete levels ($B=1$ case). The resulting signal is a discrete-amplitude discrete-time signal given by the red filled markers in panel (d). 

For concreteness, we treat the case $B=1$ and extend this classical analogy to single-qubit projective measurements. We consider a classical signal consisting of a sequence of projective measurements. Let single-qubit states be expressed in the $\hat{\sigma}_z$ basis, and $\hat{U}(t', t; X)$ be some single-qubit unitary interaction that depends on $X$ for a qubit initially prepared in the ground state at the start of the procedure at $t'$. The Born probability for the outcome $Y_t$ of the projective measurement commenced at $t$ is $|\bra{Y_t} \hat{U}(t', t; X) \ket{0}|^2$ for $ Y_t \in \{0,1 \}$ and $t'< t$. Under these circumstances, the nonlinear measurement model for single-qubit measurements can be described as an outcome, $Y_t$, of a Bernoulli trial. This model is denoted in notation by the symbol $\mathcal{Q}(\cdot)$ for taking a biased coin flip with the bias given by the argument,
\begin{align}
Y_t &= \mathcal{Q} (|\bra{Y_t} \hat{U}(t', t; X) \ket{0}|^2), \label{eqn:2} \\ 
 \mathcal{Q}(z) &:= \mathrm{Binom}(p=z; n=1, k=1). \label{eqn:3}
\end{align} In the above, a binomial distribution has success probability $z$, number of trials $n=1$, and $k=1$ successes. These repeated single-shot measurements spaced $\Delta t$ apart gives rise to discrete classical random processes. We now interpret $t$ to be a discrete time index marking a set of repeated single-shot measurements, $Y_{0:t}:= \{Y_0, Y_1, \hdots, Y_t\}$ associated with $X_{0:t}:= \{X_0, X_1, \hdots, X_t\}$. The time step $\Delta t$ is set by total time for system preparation, interaction, measurement and reset, with $\Delta t$ typically much greater than the unitary interaction period in practical experiments. The slowly varying assumption on $X$ is that $\Delta t$ is much faster that any variation in $X$ and $X$ is approximately constant over the interaction $\hat{U}$.

The key observation is that some classical continuous amplitude $X_t$ yields only a discrete allowed value of $Y_t$ upon observation. If a sensor measures a continuous amplitude signal, $s(X_t)$, only as discrete allowed amplitude levels $Y_t \in \{0,1\}$, then this sensor has the overall effect of adding noise in the $t$-domain of the signal \cite{widrow1996,lipshitz1992quantization,karlsson2005,gustafsson2013generating}, represented by $v_t$. We express the association of the classical abstract signal with Born's rule as,
\begin{align}
	|\bra{Y_t} \hat{U}(t', t; X) \ket{0}|^2 \leftrightarrow s(X_t) + v_t + \frac{1}{2}, \label{eqn:4}
\end{align} where $v_t$ represents uncertainty in our knowledge of the true Born probability inferred from single shot measurements, and the term $\frac{1}{2}$ is an arbitrary global re-scaling factor so that $s(X_t) + v_t$ is zero mean for a single-shot measurement of a maximally mixed qubit state. For typical single-qubit measurements characterizing a quantum system, we assume commuting projective measurement procedures such that a joint probability density over the random variates $Y_t, s(X_t), v_t$ exist for all $t$. continuous

The statistical properties of $v_t$ determines how effectively one can incorporate discrete-amplitude measurements into conventional classical filtering by proposing an appropriate noise model for capturing uncertainty in single-shot measurement information. Examples of classical amplitude quantized sensor information assume a variety of models, for example, where signal distributions are convolved ($\star$) with a pulse train or uniform distribution, or integrated above and below each discrete amplitude-level and re-normalised to represent measurement errors  \cite{widrow1996,karlsson2005,gustafsson2013generating}. Our departure from these approaches is to consider that any uncertainty in our knowledge of the true Born probabilities arise from truncated error distributions representing amplitude quantization into two discrete levels. Specifically, we assume a noise model to be zero-mean Gaussian distribution $\mathcal{N}(0, \Sigma_v)$  with variance $ \Sigma_v$, which is convolved with a uniform $\mathcal{U}$ distribution as,
\begin{align}
 \measure{{}}{v_t} = \mathcal{N}(0, \Sigma_v) \star \mathcal{U}(a,b) , \quad \forall t, \quad \mathbb{S}_v = \mathbb{R}, \label{eqn:6}
\end{align} where $\measure{{}}{Z}$ represents a probability measure for the real-valued, random variate $Z$ defined over the space $z \in \mathbb{S}_Z$, and $a,b$ represents finite bounds on the values of these errors due to amplitude discretization. In the above, the notation $\measure{{}}{Z}$ is interpreted as a probability mass function over discrete values in $\mathbb{S}_Z$, or density over continuous values in $\mathbb{S}_Z$. 

Under these considerations, the continuous-amplitude measurement model takes the form
\begin{align}
 &\measure{{}}{Y_t=y | s(X_t)} \nonumber \\
 &= \int_{\mathbb{S}_v} \left( (s(X_t) + v_t)(\delta_{y,1} - \delta_{y,0}) + \frac{1}{2} \right) \measure{{}}{v_t}(v) dv, \label{eqn:5}
\end{align} where $\delta_{x,y}$ takes the value $1$ if $x=y$ or zero otherwise. As one example of the noiseless ideal case $v_t\equiv 0$, the ideal Born probability of observing the qubit in $\ket{1}$ is then $s(X_t) + \frac{1}{2}$ or $\frac{1}{2} - s(X_t)$ for observing the qubit in the $\ket{0}$ state.

Substituting \cref{eqn:6} into \cref{eqn:5}, and performing the relevant integration yields the final form of likelihood function under amplitude-discretization of Born probabilities, 
\begin{align}
 \measure{{}}{Y_t=y | s(X_t)}  &= \frac{\rhoq{0}}{2} + \rhoq{0} s(X_t) \left( \delta_{y,1} - \delta_{y,0}\right), \label{eqn:7} \\
  & \equiv g_t^{Y_t=y_t}(X_t), \nonumber %\label{eqn:21}
\end{align} with the real-valued scalar $\rho_0$ obtained from integration as,
\begin{align}
 \rhoq{0}	= \erf(\frac{2b}{\sqrt{2\Sigma_v}}) + \frac{\sqrt{2\Sigma_v}}{2b} \frac{e^{-(\frac{2b}{\sqrt{2\Sigma_v}})^2}}{\sqrt{\pi}} - \frac{1}{2b}\frac{\sqrt{2\Sigma_v}}{\sqrt{\pi}} \label{eqn:8}, 
\end{align} where $\erf$ is the error function with values between $[-1, 1]$. For error sources that are symmetric with respect to how they affect single-qubit states, one sets $-a=b$ in the calculation above. Asymmetric error distributions $-a \neq b$ may arise, for example, when noise during state-detection depends on the state of qubit at the start of a projective measurement procedure \emph{e.g.} state-dependent decay of hyperfine qubits in trapped-ion quantum computers \cite{olmschenk2007manipulation,ejtemaee2010optimization}, but are not treated in this manuscript. In circumstances when $b \leq 3 \Sigma_v$, model failure may occur as information in the original distribution is being discarded by the procedure for amplitude discretization. 

While we have focused on single-qubit measurements with two possible discrete amplitude levels (`0' or `1'), one may extend to $B$-qubit measurements with $2^B$ discrete levels if these states are individually discernable in experiments. In all of these cases, we assume that $s(X_t)$ is continuous such that the properties of the resulting discrete-amplitude signal can be described via methods of Refs.~\cite{widrow1996,karlsson2005}. Additionally, we will also assume that $s(\cdot)$ is bounded and $s(\cdot)$  has an inverse $s^{-1}$() on $\mathbb{S}_{X_t}$, the space of allowed continuous values for $X_t \in \mathbb{S}_{X_t}$. As discussed below, the boundedness property ensures that our likelihood function can be safely incorporated into bootstrap particle filtering while preserving convergence properties of these filters. Subsequently, in \cref{sec:nmqa}, the inverse $s^{-1}$ is used to share estimated state information in small regions for adaptive particle filtering.

We now establish that the likelihood for projective measurements proposed here can be incorporated within bootstrap particle filtering without affecting standard convergence theorems. Our likelihood function is given by \cref{eqn:7,eqn:8}. As discussed in \cref{sec:pf}, this function needs to be bounded and continuous for conventional convergence properties of bootstrap particle filters to hold. Examinining \cref{eqn:8}, we see that for $b \neq 0$, the scalar value $\rhoq{0}$ is bounded, as evident by considering the following two limiting cases. The limit $\Sigma_v \to 0$, the scalar $\rhoq{0} \to 1$ and the ideal case of a coin flip with the win probability give by Born's rule is obtained. In the opposite limit, $\Sigma_v \to \infty$, the scalar $\rhoq{0} \to 0$ and no inference is possible. Assuming $s(\cdot)$ is bounded and $b\neq 0$, the proposed likelihood function $g_t^{Y_t=y_t}(X_t)$ is also bounded. For continuity, it is required that the likelihood function is continuous over the state space of $X_t$ for a specific instance of data $Y_t$ \cite{crisan2002survey,karlsson2005,bain2009}. For a fixed instance, $Y_t=y_t$, $g_t^{Y_t=y_t}(X_t)$ in \cref{eqn:7} is expressed by either $\rhoq{0}/2 + \rhoq{0} s(X_t)$ or $\rhoq{0}/2 - \rhoq{0} s(X_t)$. Assuming $s(\cdot)$ is continuous, the proposed likelihood function $g_t^{Y_t=y_t}(X_t)$ is also continuous with respect to $X_t$ for an instance of $y_t$. Thus for continuous and bounded $s(\cdot)$ and $b\neq 0$, the proposed likelihood satisfies the key properties required for conventional convergence properties. Further our result is general in the sense that aside from the observation process, no further information about the physical application, system dynamics or the noise environment is being assumed.  The specific case $b=1/2$ is considered in the remaining sections.

So far our work allows quantum projective measurements to be analyzed by fully exploiting the power of particle techniques for non-linear, non-Gaussian, high-dimensional state-spaces typically arising in the context of quantum characterization problems. Next, we provide technical details about an adaptive filtering framework as a variant of bootstrap particle filter with multinomial branching. However, our framework departs substantially from traditional bootstrap filters as it incorporates features for adaptive control using quantum projective measurements. We outline these features in the next section and subsequently investigate the numerical error scaling behaviour of our protocol with the $1/n$ behavior predicted by \cref{eqn:31,eqn:32}.

\section{ Adaptive filtering for quantum systems characterization \label{sec:nmqa}} 

Our challenge in this section is to outline a theoretical framework capable of adaptively characterizing and predicting classical correlations arising in projective measurement records. These classical correlations may arise, for instance, due to the interaction of the quantum system with its ambient environment, unanticipated system-dynamics, or intrinsic performance variations or noise in hardware. However, a naive application of multivariate filtering techniques to projective measurement records, even of commuting quantum observables, presents several difficulties. One issue is that quantum projective measurements are inherently local. In particular, Born's provides an unambiguous link between the measurement information and the elements of some multivariate $X$ being inferred. In the language of classical estimation and mapping, this statement means that there is often no immediate benefit in defining a joint, classical Bayesian inference problem over elements of a multi-variate $X$ in filtering, a stark contrast to related classical literature, for example, for simultaneous localization and mapping (SLAM) applications \cite{durrant2006simultaneous,bailey2006simultaneous,thrun2005probabilistic}. Below, we present a deeper analysis of the implications of our adaptive methods, as first presented in \cite{gupta2020adaptive}, on overcoming these challenges and on filter convergence.

To accommodate predictive estimation of classical correlations in projective measurement records, we now associate points in some classical (continuous) parameter space with an index, $j$, as well as the discrete sequencing index $t$.  This parameter space may arise in different physical applications where classical, continuous variables are sparsely sampled, as examples, due to geometric  arrangement of qubits in space \cite{gupta2020integration}; the choice different measurement procedures, tomography of continuous-variable systems \cite{landon2018quantitative} or noise spectroscopy. For a $d$-dimensional observation vector, we assume a classical joint probability distribution must exist over all $d$ elements of $Y_t^{(j)}, j=1, 2, \hdots d$, \emph{i.e.} quantum mechanical observables associated with $Y_t^{(j)}$ commute for all $j$ and $t$. If the labels $j=1, 2, \hdots, d$ are measurements of \textit{different} points in this parameter space, then each observation $Y_t^{(j)}$ is local and provides information only about the elements of $X_t^{(i)}$ uniquely associated with the label $j$ at iteration $t$. If instead the labels $j=1, 2, \hdots, d$ are repeated measurements of the \textit{same} point in parameter space, then the empirical mean of repeated measurements $\frac{1}{d}\sum_{j=1}^{d} Y_t^{(j)}$ is the empirical Born probability. 

In order to efficiently learn classical correlations in projective measurement records and overcome these technical challenges, the adaptive filtering framework of this section shares estimated state-information between elements of $X$ during filtering, while behaving in accordance with the branching properties of \cref{prop:banchingproperties}. Our key observation is that many physical settings and noise sources lead to classical, continuously-varying phenomena in $j$ and $t$. In our framework the outputs of classical state estimation at one coordinate point associated $j$ can be spread locally about that location. The region or neighborhood within which information-sharing occurs can also be estimated as part of the particle filtering process. Thus in the language of classical mapping problems, for each $t$ we estimate both map values at the point $j$ and approximate map gradients in small regions about $j$. The resulting output of the particle filter is a characterization of classical correlations over parameter space indexed by $j$ and $t$ using projective measurement records.

Framed in the language of classical map-building, a true state vector, $X_t$, contains both the register of map values $F_t$ and local approximate map gradient information, $R_t$, that is, 
\begin{align}
X_t &= \begin{bmatrix} F_t^{(1)} & \hdots F_t^{(d)} & R_t^{(1)} & \hdots R_t^{(d)} \end{bmatrix} = \begin{bmatrix} F_t & R_t\end{bmatrix}, \label{eqn:33}\\
\mathbb{S}_{X} & = \mathbb{S}_{F} \times \mathbb{S}_{R}, \forall t, \label{eqn:34} \\
\mathbb{S}_{F} & = [F_{\min}, F_{\max}], \quad \mathbb{S}_{R} = [R_{\min}, R_{\max}] \label{eqn:35}.
\end{align} In the above the quantities $F_t$ and $R_t$ represent $d$-dimensional, real, continuous vector-valued random variables, and their outcomes take values between $F_t^{(j)} \in [F_{\min}, F_{\max}]$ and $R_t^{(j)} \in [R_{\min}, R_{\max}]$ for any location in parameter space labelled $j= 1, 2, \hdots d$. 

Unlike typical particle filtering, our algorithm locally estimates the value of the field for a measured point at $j$, before sharing this information with neighboring points in the vicinity of $j$. The algorithm is responsible for determining the appropriate size of circular neighborhoods of radius $R_t^{(j)}$ about the point labeled by $j$. The set of points inside the neighborhood, $Q_t$ shrinks or grows about $j$ as the autonomous inference process progresses. Under these circumstances, this adaptive filtering protocol incorporates not only a local physical single-qubit projective measurement at $j$ using \cref{eqn:4}, 
\begin{align}
Y_t^{(j)} = &\mathcal{Q} (s(F_t^{(j)}) + v_t + \frac{1}{2} ), \quad \mathbb{S}_Y = \{0, 1\}, \label{eqn:39}
\end{align} but also data-messages generated by $j$ for locations $q_t$,
\begin{align}
\hat{Y}_t^{(q_t)} =& \mathcal{Q} (s(\chi_t^{(j, q_t)}) + \frac{1}{2}),\quad \mathbb{S}_{\hat{Y}} = \{0, 1\}, \label{eqn:40} \\
&\forall q_t \in Q_t^{(j)}. \nonumber
\end{align} In the above, $\chi_t^{(j,q_t)}$ is a convex combination of the existing estimate at $q_t$  and new information due to a measurement $Y_t^{(j)}$ received at $j$. The calculations associated with the term  $\chi_t^{(j,q_t)}$ invoke continuity of physical phenomena whereby new information at $j$ is shared over a region about $j$ via any choice of a sigmoidal function \cite{ito1992approximation}, here set to be a Gaussian function, parameterized by the estimate of $R_t^{(j)}$. The term $\chi_t^{(j,q_t)}$ is computed using the posterior information at $t$ and has the effect of introducing correlations between the elements of particle positions in the next iteration $t+1$. Detailed technical information is provided in \cref{app:bkg:3} for completeness. 

Having modified conventional filtering with this information-sharing mechanism, we now focus on the branching properties of this framework and any potential implications on convergence properties of typical particle filters. In particular, two different types of particle species are used by the filter within a bootstrap filtering structure. Let $\alpha$-particles be a set of $n_\alpha$ number of particles. For each parent $\alpha$-particle, let $\beta_\alpha$-particles be a set of $n_\beta$ number of daughter particles useful for enabling neighbourhood discovery and adaptation during filtering. The layer of $\alpha$-particles, $ \{ x_t^{(\alpha)}\}_{\alpha=1}^{n_\alpha} $ carry a hypothesis about $X_t$, 
\begin{align}
	 x_t^{(\alpha)} = \begin{bmatrix} f_t^{(\alpha)} & r_t^{(\alpha)}\end{bmatrix}, \label{eqn:41}
\end{align} where lowercase $x_t, f_t, r_t$ refer to instances of the true process in uppercase $X_t, F_t, R_t$. Additionally in \cref{eqn:60}, $\beta_\alpha$-particles are a set of $n_\beta$ number of particles for each of the $n_\alpha$ parents. A single $\beta_\alpha$-particle carries a hypotheses for $R_t^{(j)}$ assuming that $F_t$ and neighborhoods at other locations $R_t^{(j'\neq j)}$ are known, expressed in our notation as,
\begin{align}
	x_t^{(j, \alpha, \beta_\alpha)} = \begin{bmatrix} r_t^{(j, \alpha, \beta_\alpha)}\end{bmatrix}, \label{eqn:42}
\end{align} where the distribution of $\beta$-particles is the conditional distribution of $R_t^{(j)}$ given $X_t \setminus R_t^{(j)}$. Here, the superscript notation ${}^{(j, \alpha, \beta_\alpha)}$ refers to the location label $j$ for the parent $\alpha$-particle index, $\alpha$, and its associated $\beta_\alpha$-particle. This empirical distribution of $\beta_\alpha$-particles is related to the parent $\alpha$-particle using the empirical mean,
\begin{align}
	r_t^{(j, \alpha)} = \mathbb{E}_{\beta_\alpha}\left[{ x_t^{(j, \alpha, \beta_\alpha)}}\right]. \label{eqn:43}
\end{align} The expression above relates the empirical mean of the $\beta_\alpha$-particles for each parent $\alpha$-particle to the element $r_t^{(j, \alpha)}$. 

These manipulations lead to the following progression of empirical distributions for each $t$. 
\begin{align}
	K_{t-1}\pi_{t-1}^{n_\alpha} \to \bar{\pi}_t^{(j, n_\alpha n_\beta )} \xrightarrow[N_1=n_\alpha]{\text{re-sample}} \bar{\pi}_t^{(j, n_\alpha)} \xrightarrow[N_2=n_\alpha]{\text{re-sample}} \pi_t^{n_\alpha}, 	 \label{eqn:59}
\end{align} where the index $j$ makes explicit that each iteration $t$ receives physical measurements at the label $j$, and the superscript $n_\alpha n_\beta$ (or $n_\alpha$) indicates the total number of particles in the weighted distribution, $\bar{\pi}_t^{(j, \cdot)}$.   Two re-sampling steps are required to move from $\bar{\pi}_t^{(j, n_\alpha n_\beta )} \to \bar{\pi}_t^{(j, n_\alpha)} \to \pi_t^{n_\alpha}$ corresponding to the arrows, where $N_1$ and $N_2$ represent the total number of particles in the new generation after re-sampling.  This progression of empirical distributions of \cref{eqn:60,eqn:61,eqn:62} in \cref{eqn:59} can be compared  to the bootstrap particle filter in \cref{eqn:addin:1}, where these measures are expressed as
\begin{align}
	\bar{\pi}_t^{(j, n_\alpha n_\beta )} &:= \sum_{\alpha=1}^{n_\alpha} \sum_{\beta_\alpha=1}^{n_\beta} G_t^{(j,\alpha, \beta_\alpha)} \delta_{x,\bar{x}_t^{(j, \alpha, \beta_\alpha)}} \label{eqn:60}, \\
	\bar{\pi}_t^{(j, n_\alpha)} &= \sum_{\alpha=1}^{n_\alpha} \Omega_t^{(j, \alpha)} \delta_{x,\bar{x}_t^{(\alpha)}}, \label{eqn:61}\\
	\pi_t^{n_\alpha} &= \sum_{\alpha=1}^{n_\alpha} \delta_{x, x_t^{(\alpha)}} \label{eqn:62}.
\end{align} In \cref{eqn:60}, the particle weights $G_t^{(j, \alpha, \beta_\alpha)}$ are computed using a scoring function $g_t^{(y_{t}^{(j)}, \alpha, \beta_\alpha)}(\lambda_1, \lambda_2, \Lambda_t)$,
\begin{align}
	G_t^{(j, \alpha, \beta_\alpha)} &:= g_t^{(y_{t}^{(j)}, \alpha, \beta_\alpha)}(\lambda_1, \lambda_2, \Lambda_t), \label{eqn:63}
\end{align} which incorporates the likelihood function of \cref{sec:qsystems} and whose form and parameters are introduced in full in the Appendices as \cref{eqn:46}. The weights $G_t^{(j, \alpha, \beta_\alpha)}$ are rearranged into new weights $ \Omega_t^{(j, \alpha)}$ after the first re-sampling step in \cref{eqn:61}. The use of the bar notation, $\bar{{}}$, indicates that posterior particle positions at $t-1$ have been propagated by the transition kernel to the step $t$, as indicated by sequence in \cref{eqn:59}.

Using the empirical definitions above, the pseudo-code summarising our proposed framework is given in \cref{algorithm:nmqa}. As with standard particle filters, our algorithm is initiated by sampling from a prior distribution. At any iteration $t$, all particles from the posterior distribution at $t-1$ are propagated to $t$ via the transition kernel $K_t$ in (i). Upon receiving measurements and data messages in step (ii), particles are subsequently scored using the likelihood function in (iii)-(iv) in a manner similar to bootstrap particle filtering. The subsequent steps involve particle re-sampling steps and adaptive control actions. In particular, steps (vi) corresponds to computing an the empirical variance estimate with respect to the $\beta_\alpha$-particles for each $\alpha$. The resulting quantity, $C_t^{(k)}$ for $k=1, 2, \hdots d$, is a Fano factor and it is used in the control step (xi) by scheduling the $t+1$ physical measurement for the label $j'$ associated with maximal uncertainty $j' = \mathrm{argmax}_k \{C_t^{(k)}\}_{k=1}^d$. Individual calculation steps for our code are fully specified in \cref{app:bkg:3}.

\begin{algorithm}[H] % Floats not comptatible with revtek4-1. Need "[H] option. 
	\caption{Adaptive filtering for quantum msmts.}\label{algorithm:nmqa}
	\footnotesize
	\begin{algorithmic}[0] 
		\If{$ t = 0$}
		\State Sample $x_{0}^{(i)} \sim \pi_0, \quad i = 1, 2, \hdots n$ 
		\EndIf
		\If{$ t > 0$}
		\State (i) Sample $\bar{x}_{t}^{(i)} \sim K_{t-1}\pi_{t-1}^n, i = 1, 2, \hdots n_\alpha$
		\State (ii) Receive $Y_t^{(j)}=y_t^{(j)}$; generate $\beta_\alpha$-particles from \cref{eqn:69} or \cref{eqn:70}
		\State (iii)-(v) Compute $G_t^{(j, \alpha, \beta_\alpha)}$ using $g_t^{(y_{t}^{(j)}, \alpha, \beta_\alpha)}(\lambda_1, \lambda_2, \Lambda_t)$
		\State (vi) Replace $\bar{x}_{t}^{(\alpha, \beta_\alpha)}$ with $\xi_t^{(\beta')}$ offspring; $N_1 = \sum_{\beta'=1}^{n_\alpha n_\beta} \xi_t^{(\beta')}$. Reset to uniform weights $1/N_1$.
		\State (vii) Store $r_t^{(j, \alpha)}$ and $C_t^{(j)}$ from surviving particle-pairs
		\State (viii) Compute $\Omega_t^{(j, \alpha)} = \frac{\text{ num. of $\beta_\alpha$ survivors}}{N_1}$; discard $\beta_\alpha$-particles, $\forall \alpha$ 
		\State (ix) Replace $\bar{x}_{t}^{(\alpha)}$ with $\eta_t^{(\alpha)}$ offspring; $ N_2 = \sum_{\alpha=1}^{n_\alpha} \eta_t^{(\alpha)}$. Reset to uniform weights $1/N_2$.
		\State (x) Re-label surviving particles as $x_{t}^{(\alpha)}$, for $\alpha = 1, 2, \hdots n_\alpha$
		\State (xi) Schedule next measurement $j = \mathrm{argmax}_k C_t^{(k)}$
		\State (xii) Generate \& update $\hat{Y}_t^{(q_t)}$, for all $q_t \in Q_t^{(j)}$
		\EndIf
	\end{algorithmic}	
\end{algorithm}

Under this adaptive protocol, we seek the convenient convergence properties of particle filtering discussed in earlier sections and we  discuss the extent to which particle branching in \cref{algorithm:nmqa}, satisfies of \cref{prop:banchingproperties}. The following progression of empirical distributions in \cref{algorithm:nmqa}, 
\begin{align}
 \bar{\pi}_t^{(j, n_\alpha n_\beta )} \to \bar{\pi}_t^{(j, n_\alpha)} \to \pi_t^{n_\alpha},
\end{align} is found to be a multinomial process similar to conventional particle filtering. This process represents a rearrangement of particle weights into the weights $ \Omega_t^{(j, \alpha)}$ and it forms a multinomial random process if particle number is conserved during each re-sampling step. To see this, let $\beta'$ be the labels over all particle-pairs $(\alpha, \beta_\alpha)$ so that these labels correspond to the indices $\beta' = 1,2,\hdots n_\alpha n_\beta$. Let $A_\alpha$ be the grouping of $\beta_\alpha$-particle weights for each $\alpha$-parent, where $\alpha$ is the label over parent particles $ 1,2,\hdots n_\alpha$ as before. This means that the labels $\beta'$ are partitioned into $n_\alpha$ non-overlapping categories. Then the weights for each $A_\alpha$ category are
\begin{align}
	\Omega_t^{(j, \alpha)} := \sum_{\beta' \in A_\alpha} G_t^{(j, \beta')}. \label{eqn:64}
\end{align} The re-categorization given by the equation above occurs in (viii) of \cref{algorithm:nmqa} where the weights $\Omega_t^{(j, \alpha)}$ are proportional to the count over surviving $\beta_\alpha$ particles for each parent. Additionally, for $N_1=N_2=N$, the number of offspring in each re-sampled offspring generation satisfy
\begin{align}
	N &= \sum_{\beta'=1}^{n_\alpha n_\beta} \xi_t^{(\beta')}, \label{eqn:65}\\
	&= \sum_{\alpha=1}^{n_\alpha} \sum_{\beta' \in A_\alpha} \xi_t^{(\beta')},\label{eqn:66} \\
	&= \sum_{\alpha=1}^{n_\alpha} \eta_t^{(\alpha)}, \quad \eta_t^{(\alpha)} = \sum_{\beta' \in A_\alpha} \xi_t^{(\beta')}.\label{eqn:67}
\end{align} Thus, the re-sampling steps represent a re-categorization of $\beta_\alpha$ particle weights into non-overlapping sets associated with each $\alpha$-parent. If particle number is conserved $N_1=N_2=n_\alpha$, then two consecutive particle re-sampling steps in \cref{algorithm:nmqa} are multinomial and satisfy \cref{prop:banchingproperties}. These observations establish the second result that the adaptive filter of \cref{algorithm:nmqa} shares the same multinomial particle re-sampling process consistent with traditional particle filtering in \cref{algorithm:bootstrap}. 

The departure of \cref{algorithm:nmqa} from conventional particle filtering lies in the following step of \cref{eqn:59},
\begin{align}
	K_{t-1}\pi_{t-1}^{n_\alpha} \to \bar{\pi}_t^{(j, n_\alpha n_\beta )}, \label{eqn:68}
\end{align} which requires the generation of $\beta_\alpha$-particles by creating samples of $r_t^{(j, \alpha, \beta_\alpha)}$ at the start of each iteration $t$. We propose two methods for particle generation - `Uniform' or `Trunc. Gauss'. 
The Uniform method resets all $\beta_\alpha$ particles to the initial distribution for $R_0$ at any $t$ or parent index $\alpha$,
\begin{align}
	r_t^{(j, \alpha, \beta_\alpha)} &\sim \mathcal{U}(\mathbb{S}_R), && \forall t, r_t^{(j, \alpha, \beta_\alpha)} \in \mathbb{S}_R. \label{eqn:69}
\end{align} This method represents a strong breakdown of the transfer of estimated state information about $R_t$ from $t$ to $t+1$ during the estimation procedure. In contrast, Trunc. Gauss preserves some information about the estimated $R_t$ from $t$ to $t+1$ for each parent index $\alpha$,
\begin{align}
	r_t^{(j, \alpha, \beta_\alpha)} &\sim \mathcal{N}(\bar{r}_{t}^{(j, \alpha)} , \bar{r}_{t}^{(j, \alpha)} C_{t-1}^{(j)} ), && \forall t, r_t^{(j, \alpha, \beta_\alpha)} \in \mathbb{S}_R. \label{eqn:70}
\end{align} Here, one uses the approximation that the true distribution of $R_t$ at each step can be summarized by the first two moments of a truncated Gaussian distribution. Secondly, one assumes that $ \bar{r}_{t}^{(j, \alpha)} C_{t-1}^{(j)}$ is an appropriate approximation for the true second moment of $R_t$. As before, the barred quantities $\bar{r}_{t}$ denote that the posterior information at $t-1$ have been propagated into the current $t$ via the transition kernel $K_t$. 

The impact of these departures on the convergence properties for particle filtering are now investigated numerically in the next section. In particular, the scaling behavior of true errors with particle number $n$ will be explored via simulating the specific example in Ref. \cite{gupta2020adaptive}.

\section{\label{sec:numericanalysis}Numerical analysis}

In previous sections, we discussed the convergence properties of particle filters. For the specific case of bootstrap particle filters with multinomial re-sampling, of which both \cref{algorithm:bootstrap,algorithm:nmqa} are examples, the condition $c_t=1$ means that expected value of distance between the two empirical probability measures in \cref{eqn:31,eqn:32} decays as $\frac{1}{n}$ as $n\to \infty$ almost surely, where $n$ is the particle number. However, \cref{algorithm:nmqa} additionally accommodates both single-qubit measurements (\cref{sec:qsystems}) and adaptive control features (\cref{sec:nmqa}) that depart substantially from conventional filtering literature and thus it is not at all clear if the scaling behaviour predicted by conventional convergence theory apply here. In this section, we numerically analyze whether the true error scaling behaviour of \cref{algorithm:nmqa} with particle number accords with predictions from conventional convergence theory. 

Instead of comparing the distance between empirical measures for the state $X_t$ in \cref{eqn:31,eqn:32}, in our analyses, we focus on the first moments associated with these empirical measures and compare the posterior estimate of $F_t$ from the algorithm with the true $F_t$ using simulations. Under these approximations, let $\mathcal{L}_t $ be the expected value of the true mean-square error per label $j$ at iteration $t$. From \cref{eqn:32}, let $\mathcal{L}_t $ scale with particle number $ n \equiv n_\alpha$ according to the following postulated relationship,
\begin{align}
 \log \mathcal{L}_t = \varepsilon_t \log n_\alpha,
\end{align} where $\varepsilon_t$ is a real-valued scalar for finite values of $t$. Broadly, a value of $\varepsilon_t < 0$ indicates that expanding the particle number improves the inference procedure (error decreases with greater ${n_\alpha}$), while a value $\varepsilon_t > 0$ indicates increased error with ${n_\alpha}$. We expect $\varepsilon_t \in [-1, 0)$ for an algorithm that accords with conventional convergence theory, where $\varepsilon_t = -c_t = -1$ holds if multinomial re-sampling satisfying \cref{prop:banchingproperties} is the slowest contribution to overall algorithmic convergence. 

For this empirical analysis, we focus on a specific example presented in  Refs.~\cite{gupta2020adaptive,gupta2020integration}). In this example, one assumes that a set of independent qubits are subject to a classical, externally applied dephasing field, $k \in \{1, 2, \hdots d\}$ labels coordinate positions of qubits in 2D space, and $s(\cdot)$ is given by a relative phase single-qubit Ramsey measurement. For classical dephasing in Ramsey measurements, 
 \begin{align}
 	s(F_t) \equiv \frac{1}{2} \cos(F_t), \quad \mathbb{S}_F:= [0, \pi], \label{eqn:71}
 \end{align} where $F_t$ has the physical interpretation of qubit phases at each location, giving rise interference between quantum amplitudes of single-qubit states. In Ramsey measurements, the form of $s(\cdot)$ is nonlinear, bounded, continuous over the half cycle $[0, \pi]$, and this gives rise to a nonlinear particle filtering problem discussed in previous sections. Substituting \cref{eqn:71} into \cref{eqn:4} gives the measurement model for single qubits under dephasing,
 \begin{align}
 	Y_t &= \mathcal{Q} ( \frac{1}{2} \cos(F_t) + v_t + \frac{1}{2} ), \quad \mathbb{S}_Y = \{0, 1\}. \label{eqn:10}
 \end{align} We further assume that a slowly-varying $X_t$ is sampled rapidly using measurement data, and the approximation \begin{align}
 K_{t}(x, A) = \measure{{}}{X_{t+1} \in A | X_{t}=x} = \delta(x), \label{eqn:18}
\end{align} is used, where the symbol $\delta(x)$ is interpreted as the Dirac-delta at $x$. The equation above defines the assumption that $F_t$ is approximately static relative to a high measurement sample rate in $t$. 

By taking the first $d$ elements of the posterior $X_t$, and comparing it with a true dephasing field used during simulations, the error is computed as
\begin{align}
	\mathcal{L}_t := \ex{ ||( \mathbb{E}_\alpha[f_t^{(\alpha)}] - F^*)||_2^2 / d}. \label{eqn:75} 
\end{align} In the above, the true map used in simulations is the vector-valued $F^*$, $\ex{\cdot}$ is an expectation taken over repetitions of each simulation, $\mathbb{E}_\alpha[\cdot]$ captures the first moment of the particle distribution. Specifically $\mathbb{E}_\alpha [f_t^{(\alpha)}]$ are the first $d$ elements of the posterior estimated $X_t$ obtained as the mean of the posterior particle positions at $t$, and $||\cdot ||_2^2$ represents the total squared error on all $d$ locations. 

\begin{figure}[t!]
 \centering
	\includegraphics[scale=1.0]{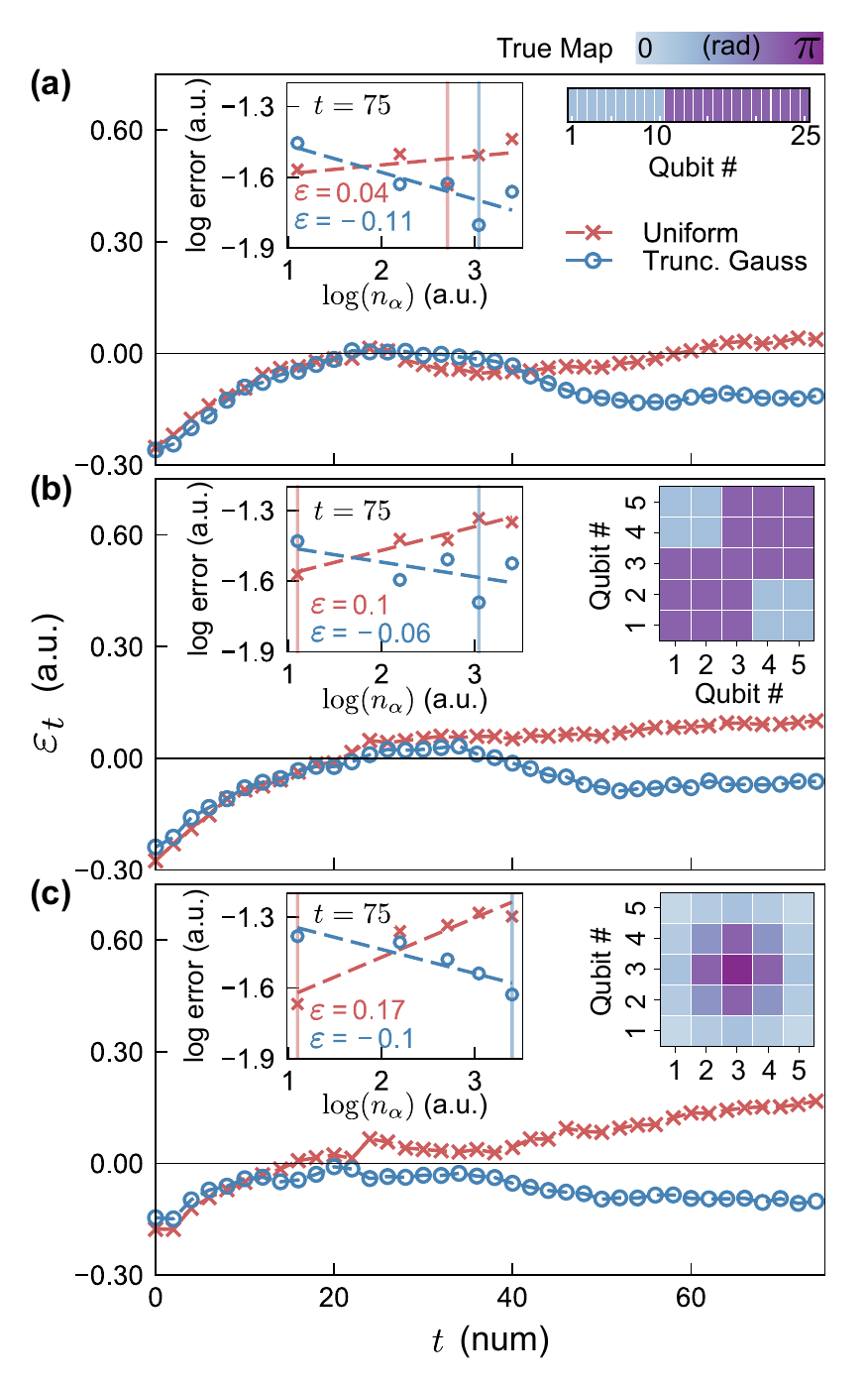}
	\caption[Results for NMQA error scaling behaviour for Uniform and Trunc. Gaussian.]{\label{fig_data_scaling} Error scaling behaviour for Uniform and Trunc. Gaussian. Rows represent 1D linear array, a 2D array with a square field, and a 2D array Gaussian field with $d=25$ (right insets); with high and low qubit phase values of $0.25\pi, 0.75\pi$ radians depicted on colorscales. (a)-(c) main panels depict $\varepsilon_t$ against $t$ for tuned parameters. $\varepsilon_t > 0$ for Uniform; $\varepsilon_t \in [-1, 0)$ for Trunc. Gaussian for $t \gg d$ agrees with typical convergence analysis. Data for Uniform (red crosses) and Trunc. Gaussian (blue circles). Left insets depict the log of the expected mean square map reconstruction error per qubit over 50 runs against the log of $n_\alpha$ number of $\alpha$-particles. From left to right, the $x$-axis shows increased particle number $n_\alpha = 3, 9, 15, 21, 30; n_\beta = \frac{2}{3} n_\alpha$; for $t=75$. $\varepsilon_t$ is the gradient of the line of best fit (dashed lines). Vertical colored lines mark particle configurations yielding lowest empirical error for tuned parameters ($\Sigma_v, \Sigma_F, \lambda_1, \lambda_2$) for Uniform: (a) $(6.0e^{-9}, 0.10, 0.88, 0.72)$; (b) $(7.1e^{-7}, 0.04, 0.88, 0.72)$ (c)$(5.9e^{-9}, 0.10, 0.72, 0.95)$. Trunc. Gaussian: (a) $(9.0e^{-8}, 2.6e^{-5}, 0.88, 0.72)$; (b) $(8.9e^{-7}, 1.9e^{-9},0.88, 0.72)$; (c) $(0.77, 4.6e^{-6}, 0.72, 0.95)$.}
\end{figure} 

Using this error metric for the three case studies examined in \cite{gupta2020adaptive}, a plot of the log true mean-square error per qubit against log number of particles yields the estimated slope $\varepsilon_t $ in 
\cref{fig_data_scaling}. For each case study, the true map $F^*$ over the arrangement of $d=25$ qubit locations is provided in the right insets in both 1D and 2D. In the main panel, we plot the extracted $\varepsilon_t$ against $t$ for both Uniform (red crosses) and Trunc. Gaussian (blue circles) expansion strategies. For each value of $t$, these $\varepsilon_t $ values are calculated from the gradient of a line of best fit for the log of true mean-square error per qubit in map reconstruction against the log of $n_\alpha$, the total number of $\alpha$ particles at the beginning and end of each $t$. Example raw and best fits for the case $t=75$ are shown in the left insets in (a)-(c). 

For $t\lesssim d=25$, we observe $\varepsilon_t < 0$ for both data sets. This means that increasing $n_\alpha$ under any expansion strategy (Uniform or Trunc. Gaussian) improves the inference procedure when data is sparse, assuming that the correct initial distribution has been specified. For the high-data regime, $t\gg d$, the values of $\varepsilon_t$ diverge between the two expansion strategies. The Uniform approach in (a)-(c) shows that $\varepsilon_t > 0$ as $t$ increases. In contrast, under a Trunc. Gaussian strategy, we see that $\varepsilon_t \in [-1, 0)$ is satisfied for all values of $t \gg d$ in all cases studied. 

These observations are consistent with our expectations. For the Uniform strategy, we expect $\varepsilon_t >0$ since filter convergence does not hold asymptotically as information about $R_t$ is reset to the prior distribution for $R_0$ even for large values of $t$. In this limit, increasing particle number $n_\alpha$ increases the level of randomness in the filtering distributions. By contrast, under a Trunc. Gaussian strategy, we expect that in some physical applications, it is reasonable to assume that that length-scale distributions are well described by the first two moments of an appropriately designed truncated Gaussian distribution at each $t$. Under these conditions, information transfer from $t$ to $t+1$ occurs such that filter convergence may hold and we expect that the condition $\varepsilon_t \in [-1, 0)$ is satisfied asymptotically. 

Thus, true error scales in a predictable way for \cref{algorithm:nmqa} under a Trunc. Gaussian approach. Our results provide compelling numerical evidence that classical convergence behaviour appears to hold even if single-qubit projective measurements and adaptive control features are incorporated into a classical filtering framework. Additional supporting numerical results are provided in \cref{app:bkg:3}. 

\section{\label{sec:Conclusion}Conclusion}

In this work, we explore a new implementation of adaptive filters for quantum systems with projective measurement  models  and  rigorously  demonstrate  that  the  theoretical  basis  of  classical  nonlinear  filtering  applies  in this context. Taking inspiration from classical signal processing, we combine discrete analysis of continuous amplitude signals with Born's rule and show that a novel likelihood function can be used to individually filter a sequence of single-shot projective measurements.  While this likelihood function can be incorporated in any classical filtering framework, we show that its inclusion into particle filtering methods preserves important convergence properties of particle-based solutions generalizable to a broad range of difficult inference problems encountered in quantum characterization and control.

Extending these insights, we investigate convergence of classical adaptive filtering of quantum projective measurements. These convergence properties are especially useful if practical implementations limit \emph{apriori} knowledge typically required for filter tuning or training machine learning methods. Indeed, the technical approach we introduce here for the modification of classical filtering algorithms is generalizable to a wide class of problems as we make minimal assumptions about measurement procedure, noise characteristics, or the dynamics of an open multi-qubit system. Applications include adaptive measurement selection \cite{gupta2020adaptive,gupta2020integration}, but other examples could include classical noise spectroscopy, efficient tomography, spatiotemporal forecasting, or adaptive calibration and control tasks using time-series of discrete projective measurements. 

Focusing on numerical studies for one such example in Ref.~\cite{gupta2020adaptive}, an empirical rate of convergence computed as the scaling factor, $\varepsilon_t$, of true error with particle number was shown to be theoretically expected to satisfy the condition $\varepsilon_t \in [-1, 0)$. This condition $\varepsilon_t \in [-1, 0)$ appears to hold for a range of algorithmic and physical configurations under a Trunc. Gaussian particle expansion strategy in a manner similar to convergence properties for conventional particle filtering. While these  numeric studies represent only one type of application of what is a broadly deployable algorithmic framework, the empirical results provide compelling evidence that it may be possible to extend conventional convergence theorems to our methods. 

Thus far we have put forth the idea that effect of quantum projective measurements on classical filtering methods can instead be understood as the effect of a discrete likelihood function on convergence properties of the underlying branching processes. In the case that these branching process can be viewed as classical random walks, for instance, in classification and regression tree analysis, the insights presented in this manuscript can be used to appropriately customize alternative stochastic frameworks for predictive-control. All of these stochastic methods have wide-ranging implications for device calibration, crosstalk analysis, non-Markovian noise characterization and automated system tuneup. We look forward to exploring how the rigorous analysis we have performed here may be applied to a broad class of adaptive filtering problems for near term quantum computers.

\section*{Data and Code Availability}
Unrestricted access to the codebase and data is provided via http://github.com/qcl-sydney/nmqa.\\

\section*{\label{sec:acknow}Acknowledgments}
R. Gupta would like to thank Andrew Doherty for extensive discussions. This work was partially supported by the US Army Research Office under Contract W911NF-12-R-0012, and a private grant from H. and A. Harley.

% \bibliography{./references_v10}  

%

\clearpage

\appendix

\section{\label{app:bkg:0}Background to particle filters}

This Appendix is a primer on background theory for convergence analysis of  particle filters. Background concepts and key results from Ref. \cite{bain2009} are summarized to accompany \cref{sec:pf} of the main text, and derivations are provided to illustrate key stepping stones for the overall argument about the convergence of standard, bootstrap particle filters with multinomial particle branching mechanisms.

\subsection{\label{app:bkg:1} Convergence analysis of particle filters}

The key objective of any particle filter is to obtain an approximation to a posterior Bayesian distribution. Let $\measure{{}}{\cdot}$ denotes a probability measure, and let $\pi_t$ be the true posterior distribution in Bayesian analysis. Then $\pi_t$ is expressed as the conditional probability of $X_t$ given the \ofield{} generated by the observations $Y_{0:T}$,
\begin{align}
 \pi_t &:= \measure{{}}{X_t \in A | \ofieldgen{Y_{0:T}}}, \quad \forall A \in \mathcal{S}_X, \pi_t \in P(\mathbb{S}_X), \label{eqn:11}\\
 \pi_0 &= \mathcal{U}(\mathbb{S}_X). \label{eqn:12}
\end{align} In the above, the measure $\pi_t$ is a \textit{random} probability measure in the space of all possible measures $ P(\mathbb{S}_X)$. The space $\mathcal{S}_X$ is the Borel \ofield{} generated by the state-space $\mathbb{S}_{X}$ for all $t$. Similar comments apply to the \ofield{} generated by the observations $Y_{0:T}$. The notation $X_t \in A$ means that some instance of $X_t$ is an event $X_t \in A$ in the Borel \ofield{} generated by the state space of $X$. A list of useful definitions are summarized in \cref{table:notation}. The initial condition, $\pi_0$, is taken to be a uniform distribution over the state-space of $X_0$ for this manuscript.

If $\pi_t$ is obtained as a solution to a inference problem, it can be used to obtain information of functions of $X_t$,
\begin{align}
\pi_t f &= \ex{f(X_t) | \ofieldgen{Y_{0:T}}} \quad \forall f \in B(\mathbb{S}_X), A \in \mathcal{S}_X. \label{eqn:13}
\end{align} In the above, the term $B(\mathbb{S}_X)$ refers to a space of bounded, $\mathcal{S}_X$-measurable functions which correspond to transformations of the state (\emph{e.g.} dynamical evolution of $X_t$, measurement model for $X_t$). Here, $\ex{\cdot}$ refers to an expectation of a random variable or expectation of bounded, Borel-measurable functions of random variables, and provides the link between the solution of the particle filter and how information about $X_t$ can be obtained in analysis. 

In practical applications, a frequently used assumption in sequential Bayesian inference is that the process $X_{0:t}$ is a Markov chain, 
\begin{align}
 \measure{{}}{X_{t+1} \in A | \ofieldgen{X_{0:t}} } = \measure{{}}{X_{t+1} \in A | X_{t}}, \label{eqn:14}
 \end{align} where knowledge of the entire process $X_{0:t-1}$ can be safely discarded if $X_t$ is accessible. The transition kernel captures the probability $X_{t+1}$ occurs if the previous state was $X_t = x$,
\begin{align}
 K_t(x,A) & := \measure{{}}{X_{t+1} \in A | X_{t}=x}, \label{eqn:15}\\
 K_{t}(x, A) &: \mathbb{S}_X \times \mathcal{S}_X \to P(\mathbb{S}_X) \times B(\mathbb{S}_X), \label{eqn:16}\\
	& \quad \forall t = 0, 1, \hdots, A \in \mathcal{S}_X, x \in \mathbb{S}_X. \nonumber
\end{align} In the above, any transition kernel $K_{t}(x, A)$ for Markov chains satisfies the property that $ K_{t}(\cdot, A) \in B(\mathbb{S}_X)$ is a bounded Borel-measurable function for any $A \in \mathcal{S}_X$, and $ K_{t}(x, \cdot)$ is a probability measure over all possible final states at $t+1$ if $X_t = x$. The kernel $K_{t}(x, A)$ and the initial condition $X_0$ thus completely specify the statistical properties of the Markov chain $X_{0:t}$.

Specifically for sequential Bayesian analysis, many non-Markov classical random processes can be recast or are well approximated by Markov chains if $X_t$ and $K_{t}(x, A)$ are appropriated defined (\emph{e.g.} hidden Markov models, autoregressive moving average (ARMA) representations \cite{gupta2018machine}). The common feature of all these methods is that they assume properties of $K_{t}(x, A)$ are known \emph{a priori} or its parameters can be learned from data. This state-transition information is expressed in kernel notation, where $K_t\pi_t$ for any probability measure $\pi_t$ is shorthand for
\begin{align}
 K_t\pi_t (A) := \int_{\mathbb{S}_X} K_t(x,A) \pi_t(x) dx, \label{eqn:17}
\end{align} where the argument $A \in \mathcal{S}_X$ is a particular event under consideration, and measures are defined on $\mathbb{S}_X$. The appearance of $K_t(x,A) $ inside the integral aligns with the use of kernel nomenclature, and in the case where $t$ is a time index, $K_{t}(x, A)$ refers to the evolution of $X_t$. 

Having defined state transitions of $X_{t} \to X_{t+1} $, we now define the measurement process $X_{0:t} \to Y_{t} $. One typically defines a Markov measurement model, where the observation $Y_t$ depends only on the state $X_t$ and not the full process $X_{0:t}$. This measurement process is captured by a positive function, $g_t^{(Y_t=y_t)}(X_t)$ which satisfies, 
\begin{align}
 g_t^{(Y_t=y_t)}(X_t) &\in B(\mathbb{S}_X), \label{eqn:19}\\
 g_t^{(Y_t=y_t)}(X_t) & \geq 0, \quad \forall t. \label{eqn:20}
\end{align} In the above, the lowercase $y_t$ represents instances of the true random variate $Y_t$, and $B(\mathbb{S}_X)$ is the space of bounded Borel measurable function defined on $\mathbb{S}_X$. It is additionally assumed in this manuscript that $g_t^{(Y_t=y_t)}(X_t)$ is continuous with respect to $x \in \mathbb{S}_X$, and strictly positive, though this strictly positive condition can be relaxed \cite{bain2009} and it is not discussed in detail here. The notation $g_t := g_t^{(Y_t=y_t)}(X_t)$ is sometimes employed as shorthand in the remainder of this section. In the case that $g_t$ represents measurement noise density in conventional estimation theory, $g_t$ is called a likelihood function \cite{bain2009}. We adopt the nomenclature of a likelihood function to refer to $g_t$ for ease of reading, though materials of this section do not need to make this identification. 

Under these definitions, Bayes rule for the conditional probability of $X_t$ given observations $Y_{0:t}$ is written in recursive form as,
\begin{align}
	\pi_t &:= g_t * p_t (A) := \frac{\int_A g_t(x) dp_t(x) }{p_t g_t }, \label{eqn:22}\\
	p_t &:= K_{t-1} \pi_{t-1}, p_t \in P(\mathbb{S}_X), \label{eqn:23}\\
	p_t g_t &:= \int_{\mathcal{S}_X} g_t(x) dp_t(x) > 0 \label{eqn:24}
\end{align} The use of the projective product in the first line, $*$, is essentially a restatement of Bayes rule. In this product, the measure $p_t$ is a `predictive probability measure' in the sense that it uses the transition kernel for a Markov chain to obtain the distribution at $t$ if the distribution at $t-1$ is known. The integral in the numerator exists only over the outcomes $A$, whereas the denominator represents a normalization over all outcomes in the state space of $X$. The resulting product of $p_t$ and $g_t$ yields the true measure $\pi_t$ at $t$, and the next iteration is commenced for the incoming measurement $Y_{t+1}$.

The measures $\pi_t, p_t$ are true random measure revealed only through many experimental runs, and depend on the random observation record \emph{i.e.} $ \pi_t \equiv \pi_t^{Y_{0:t}}, p_t \equiv p_t^{Y_{0:t}}$. The notation above captures the concept that the observation record $Y_{0:t}$ is a random observation vector. In any single experimental run, we collect data by measuring instances of the true random process $Y_{0:t}$ and we obtain the `fixed' realization of the observation record, $y_{0:t} := \{Y_0 = y_0, Y_1 =y_1, \hdots, Y_t = y_t \}$. This means that once data, $y_{0:t}$, has been collected for a single experimental run, the measures $\pi_t^{y_{0:t}}$ and $p_t^{y_{0:t}}$ can be computed using this data. For any fixed path ($y_{0:t}$), these computations will in principle yield non-random quantities for $\pi_t^{y_{0:t}}$ and $p_t^{y_{0:t}}$. Thus, $\pi_t^{y_{0:t}}$ and $p_t^{y_{0:t}}$ should be distinguished from the random measures $\pi_t, p_t$. 

Under these definitions of the true random measures $\pi_t, p_t$, let $n$ represent the total number of particles $\{x_t^{(i)}\}_{i=1}^n$, each of which represents a hypothesis for $X_t$. Then the discrete approximation, $ \pi_t^n$, for the true $\pi_t$, is expressed as
\begin{align}
 \pi_t^n &:= \frac{1}{n}\sum_{i=1}^n \delta_{x_t^{(i)}}, \quad (\pi_t^n)_{n=1}^\infty \to \pi_t, \label{eqn:25}
\end{align} and similarly $ p_t^n = K_{t-1}\pi_{t-1}^n$ is the empirical sample for the true $p_t$ (cf. Remark 10.17 in \cite{bain2009}). In the above, the Kronecker delta, $\delta_{(\cdot)}$, is used because the approximate probability measures represent discrete probability distributions. 

Once a measurement result is received at $t$, a particle weight, denoted $G_t^{(i)}$, is computed for all $i=1,2, \hdots n$ and a weighted empirical distribution is formed. This weighted distribution, $\bar{\pi}_t$, is the set of both particles and their weights, expressed as
\begin{align}
 \bar{\pi}_t^n &:= \sum_{i=1}^n G_t^{(i)} \delta_{\bar{x}_t^{(i)}}, \quad \bar{x}_t^{(i)} \sim p_t^n. \label{eqn:26}
\end{align} Here, the particle weight, $G_t^{(i)}$ represents the probability of receiving a measurement $Y_t=y_t$ if the hypothesis captured by the $i$-th particle $X_t= \bar{x}_t^{(i)}$ is taken to be true. Raw weights for each particle are empirically normalized across $n$ particles for each $t$. The notation $\bar{x}_t^{(i)}$ indicates that $\bar{\pi}_t^n$ should be computed after evolving particles from $t-1$ into the current iteration at $t$, and $G_t^{(i)}$ are calculated based on a single measurement, $Y_t$, received at $t$. The additional randomness introduced by the particle approximations means that we have additional filtrations generated by the algorithm
\begin{align}
\mathcal{G}_t = \ofieldgen{x_s^{(i)}, \bar{x}_s^{(i)}, s \leq t, i = 1, \hdots, n} \label{eqn:27}\\
\bar{\mathcal{G}}_t = \ofieldgen{x_s^{(i)}, \bar{x}_s^{(i)}, \bar{x}_t^{(i)}, s < t, i = 1, \hdots, n}. \label{eqn:28}
\end{align} Here, $\bar{\mathcal{G}}_t \subset \mathcal{G}_t$, where $\bar{\mathcal{G}}_t $ includes particles $\bar{x}_t^{(i)}$ at start of iteration $t$ but excludes the posterior particles at the end of $t$. Thus, $\bar{\mathcal{G}}_t $ captures the output of a re-sampling step and it is used to describe the properties of the particle branching mechanism in \cref{prop:banchingproperties} (see below).

The arrow in \cref{eqn:25} invokes a notion of convergence, and the definition of convergence can take many forms. We focus on `convergence in expectation' and `almost sure' (a. s.) convergence \cite{bain2009}, which both imply convergence in probability and distribution \cite{evans2004probability}. For a sequence of random measures, $(\mu^n)_{n=1}^{\infty}$, and another random measure $\mu$, these are formalized as
\begin{align}
 \lim_{n \to \infty} \ex{|\mu^n f - \mu f|}& = 0, \quad \forall f \in C_b(\mathbb{S}), \label{eqn:29}\\
 \lim_{n \to \infty} \mu^n & = \mu, \quad \mathbb{P}\mathrm{-a.s.} \label{eqn:30} 
\end{align} The first line defines convergence in the expectation values for all continuous, bounded functions $f$. The second line defines a. s. convergence. 

Having established empirical distributions formed by discrete particles, and some notions of convergence, a general theorem for particle filters is restated below summarizing relevant known results reported in Ref. \cite{bain2009}.
\begin{theorem} [Bain \& Crisan, 2009] \label{bain:10.7}
 For all $f\in B(\mathbb{S}_X)$ and all $ t \in [0,T]$ the limits: \\
 (a0) $\lim_{n\to \infty} \ex{|\pi_t^{n,y_{0:t}} f - \pi_t f|} = 0$ \\
 (b0) $\lim_{n\to \infty} \ex{|p_t^{n,y_{0:t-1}} f - p_t f|} = 0 $\\
 hold if and only if for all $f\in B(\mathbb{S}_X)$ and all $t \in [0,T]$: \\
 (a1) $ \lim_{n \to \infty} \ex{|\pi_0^{n,y_{0:t}} f - \pi_0 f|} = 0$ \\
 (b1) $\lim_{n\to \infty} \ex{|p_t^{n,y_{0:t-1}} f - K_{t-1}\pi_{t-1} f|} = \lim_{n\to \infty} \ex{|\pi_t^{n,y_{0:t}} f - \bar{\pi}_t^{n,y_{0:t}}f|} = 0 $
\end{theorem}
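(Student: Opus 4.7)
The equivalence is driven by the recursive structure $\pi_{t-1} \xrightarrow{K_{t-1}} p_t \xrightarrow{g_t *} \pi_t$, and my plan is to attack both directions of the iff using two triangle inequalities that separate the kernel-propagation step from the Bayes-update step. A key preliminary observation is that the first equation in (b1), $\ex{|p_t^n f - K_{t-1}\pi_{t-1}f|} \to 0$, coincides with (b0) since $p_t := K_{t-1}\pi_{t-1}$ by \cref{eqn:23}. This already gives equivalence of the two ``predictive'' statements, so the theorem reduces to showing that (a0) is equivalent to the second part of (b1) --- namely $\ex{|\pi_t^n f - \bar{\pi}_t^n f|}\to 0$ --- given that (b0) holds and (a1) fixes the initial condition.

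For $(a1)\wedge(b1)\Rightarrow(a0)\wedge(b0)$, I would write
\begin{equation*}
\ex{|\pi_t^n f - \pi_t f|} \leq \ex{|\pi_t^n f - \bar{\pi}_t^n f|} + \ex{|\bar{\pi}_t^n f - \pi_t f|};
\end{equation*}
the first summand vanishes by the second part of (b1), while the second requires pushing $p_t^n \to p_t$ (supplied by the first part of (b1)) through the projective product, using $\bar{\pi}_t^n = g_t * p_t^n$ (which matches the empirical definition in \cref{eqn:26m}) and $\pi_t = g_t * p_t$. The case $t=0$ reduces directly to (a1). The reverse direction $(a0)\wedge(b0)\Rightarrow(a1)\wedge(b1)$ is nearly symmetric: (a1) is (a0) at $t=0$, the first part of (b1) is (b0), and for the remaining part I would apply
\begin{equation*}
\ex{|\pi_t^n f - \bar{\pi}_t^n f|} \leq \ex{|\pi_t^n f - \pi_t f|} + \ex{|\pi_t f - \bar{\pi}_t^n f|},
\end{equation*}
with the first term handled by (a0) and the second by the same projective-product manipulation, now fed directly by (b0).

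The main obstacle in both directions is controlling the ratio in the projective product, since the map $p \mapsto g_t * p$ is not globally Lipschitz and becomes singular where $p g_t = 0$. Using the algebraic identity
\begin{equation*}
\frac{a_n}{b_n} - \frac{a}{b} = \frac{a_n - a}{b_n} - \frac{a(b_n - b)}{b\, b_n},
\end{equation*}
with $a_n = p_t^n(g_t f)$, $b_n = p_t^n g_t$, and $a, b$ their limits, the problem reduces to $L^1$ control of $|a_n - a|$ and $|b_n - b|$ (both of which follow from convergence of $p_t^n$ on bounded test functions, since $g_t$ and $g_t f$ both lie in $B(\mathbb{S}_X)$) together with a uniform-in-$n$ bound on $1/b_n$. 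The strict positivity $p_t g_t > 0$ assumed in \cref{eqn:24} combined with boundedness of $g_t$ yields $b_n \to b > 0$ in probability; a uniform-integrability argument, using that $0 \leq g_t \leq \|g_t\|_\infty$ is bounded, then upgrades this to the required $L^1$ convergence of the ratio. This last step is the technical heart of the argument, and is exactly where the boundedness and continuity of $g_t$ established for quantum projective measurements in \cref{sec:qsystems} become essential to applying the theorem in the discrete-measurement quantum setting.
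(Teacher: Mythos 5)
Your overall skeleton---triangle inequalities that separate the kernel-propagation step from the Bayes-update step, plus continuity of the projective product $p \mapsto g_t * p$---is the same one the paper uses (its proof of the random-measure analogue, \cref{theorem:convergence}). The genuine gap is in how you read the first limit in (b1). You take it to involve the \emph{true} posterior, $K_{t-1}\pi_{t-1}$, and conclude it coincides with (b0); the printed statement does omit a superscript, but the intended condition---made explicit in the paper's own restatement \cref{eqn:conv:4} of \cref{theorem:convergence} and in the bound \cref{eqn:31}---compares two \emph{empirical} objects, $\ex{|p_t^{n} f - K_{t-1}\pi_{t-1}^{n} f|}\to 0$. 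That distinction is the entire point of the theorem: (a1) and (b1) are local consistency conditions on the algorithm that never reference the unknown $\pi_t$ or $p_t$ for $t\geq 1$. With the correct reading your reverse direction collapses, because you assume (b0) as a hypothesis exactly where it must be \emph{derived}: the argument has to proceed by induction on $t$---convergence of $\pi_{t-1}^n$ to $\pi_{t-1}$ gives convergence of $\pi_{t-1}^n(K_{t-1}f)$ to $p_t f$ (using $K_{t-1}f\in B(\mathbb{S}_X)$), the first local condition then yields $p_t^n\to p_t$, the projective product yields $\bar{\pi}_t^n = g_t * p_t^n \to \pi_t$, and the second local condition yields $\pi_t^n\to\pi_t$---which is precisely the induction the paper invokes ("\cref{eqn:conv:1} follows by induction") and your plan omits. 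Symmetrically, in the forward direction you still owe a proof of $\ex{|p_t^{n}f - K_{t-1}\pi_{t-1}^{n}f|}\to 0$, which requires the extra triangle step through $p_t f$ and the term $|\pi_{t-1}(K_{t-1}f)-\pi_{t-1}^{n}(K_{t-1}f)|$, as in \cref{eqn:conv:extrareq:1}.

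A secondary, fixable weakness is the ratio control. Your algebraic identity leaves $1/(p_t^{n}g_t)$ in both terms, which is why you find yourself needing a "uniform-in-$n$ bound on $1/b_n$"; convergence in probability of $b_n$ to $b>0$ does not supply such a bound. The paper's decomposition sidesteps this by grouping terms so that only $1/(p_t g_t)$ appears---controlled by the assumption $p_t g_t \geq k_t$ (or by strict positivity of $p_t g_t$ along a fixed observation record)---together with the automatic bound $p_t^{n}(f g_t)/p_t^{n}g_t \leq \| f\|_\infty$. Your fallback (the ratio itself is bounded by $\|f\|_\infty$, so convergence in probability plus bounded convergence gives the $L^1$ limit) can be made rigorous for a fixed record $y_{0:t}$, but as written the uniform-integrability step is not justified by boundedness of $g_t$ alone, and it is weaker than the paper's one-line bound.
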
 In the theorem above, $\pi_0^n$ is the particle approximation to the true prior, and the particles $x_0^{(i)} \sim \pi_0$ are sampled from the true initial distribution, $\pi_0$. The theorem above refers to a single fixed path, $Y_{0:t}=y_{0:t}$, constituting a single run of measurements. The same results hold for a random observation vector, $Y_{0:t}$ under the additional condition that there exists some constant $k_t$ such that $p_t g_t \geq k_t$. With these substitutions, the conditions (a1) and (b1) imply convergence in expectation of the sequences $\pi_t^{n,Y_{0:t}}$ ($p_t^{n,Y_{0:t-1}}$) to $\pi_t$ ($p_t$), and the proof is provided in the Appendix.

In order to use \cref{bain:10.7} to make necessary and sufficient statements about convergence (of any form), we restrict our discussion to the mapping of continuous bounded functions, $f$, and place additional requirements on the transition kernel and likelihood functions. The following theorem states these additional requirements and establishes the link with the branching mechanism for the particle filter, summarized and restated from Ref. \cite{bain2009}.
\begin{theorem} \label{bain:conv_summary}
Let $(\pi_t^{n,Y_{0:t}})_{n=1}^\infty$ and $(p_t^{n,Y_{0:t-1}})_{n=1}^\infty$ be measure valued sequences produced by particle approximations and branching mechanisms satisfying \cref{prop:banchingproperties} (below). Assume there exists some constant $k_t$ such that $p_t g_t \geq k_t$. Then:\\
(i) \cref{bain:10.7} holds for all $f\in B(\mathbb{S}_X)$ and all $ t \in [0,T]$ with the substitutions $\pi_t^{n,y_{0:t}} \to \pi_t^{n,Y_{0:t}}$ and $p_t^{n,y_{0:t-1}} \to p_t^{n,Y_{0:t-1}}$. \\
(ii) Assume that the transition kernel for $X$ is Feller and that the likelihood functions are all continuous for all $t\in [0,T]$. Then, the sequences $\pi_t^{n,Y_{0:t}}$ ($p_t^{n,Y_{0:t-1}}$) converges in expectation to $\pi_t$ ($p_t$) for all $t \in [0, T]$ if and only if conditions (a1) and (b1) are satisfied for all $f\in C_b(\mathbb{S}_X)$ and all $ t \in [0,T]$. \\
(iii) Assume that the transition kernel for $X$ is Feller and that the likelihood functions are all continuous for all $t\in [0,T]$. If the branching mechanism satisfying \cref{prop:banchingproperties} is a multinomial random process, then, $\lim_{n\to \infty} \pi_t^{n,Y_{0:t}} = \pi_t, \quad \mathbb{P}\mathrm{-a.s.}$ and $\lim_{n\to \infty} p_t^{n,Y_{0:t-1}} = p_t, \quad \mathbb{P}\mathrm{-a.s.}$
\end{theorem}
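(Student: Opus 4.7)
The plan is to proceed by induction on $t \in \{0,1,\ldots,T\}$, with \cref{eqn:31,eqn:32} (which are the direct consequences of \cref{prop:banchingproperties}) as the two base estimates at each step. Initialization at $t=0$ is immediate: since $\{x_0^{(i)}\}$ are drawn i.i.d.\ from $\pi_0$, the standard Monte Carlo estimate yields $\ex{|\pi_0^n f - \pi_0 f|} \to 0$ for every $f \in B(\mathbb{S}_X)$, so hypothesis (a1) of \cref{bain:10.7} is satisfied. All three parts then reduce to propagating this base case through the prediction, weighting, and re-sampling operations.

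For part (i), I would decompose the error at time $t$ as
\begin{align*}
\ex{|\pi_t^{n,Y_{0:t}} f - \pi_t f|}
 \leq \ex{|\pi_t^{n,Y_{0:t}} f - \bar\pi_t^{n,Y_{0:t}} f|}
 + \ex{|\bar\pi_t^{n,Y_{0:t}} f - \pi_t f|}.
\end{align*}
The first term is controlled by Jensen's inequality applied to the $L^2$ bound of \cref{eqn:32}. The second is handled by recognising $\bar\pi_t^n = g_t * p_t^n$ and exploiting the Lipschitz property of the projective product $g_t * (\cdot)$ on the set of measures satisfying $p_t g_t \geq k_t$. This is precisely where the hypothesis $p_t g_t \geq k_t$ enters: it permits replacement of the empirical normaliser $p_t^n g_t$ by $p_t g_t$ at the cost of an additive error of order $|p_t^n g_t - p_t g_t|/k_t^2$, which then reduces everything to controlling $\ex{|p_t^{n,Y_{0:t-1}} h - p_t h|}$ for $h \in \{g_t, g_t f\}$. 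The prediction step $p_t^n = K_{t-1}\pi_{t-1}^n$ is itself handled by combining \cref{eqn:31} with the inductive hypothesis on $\pi_{t-1}^n$, together with the trivial bound $\|K_{t-1}h\|_\infty \leq \|h\|_\infty$.

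For part (ii), only the reverse implication requires work since $C_b(\mathbb{S}_X) \subset B(\mathbb{S}_X)$. The Feller property guarantees $K_{t-1}f \in C_b(\mathbb{S}_X)$ whenever $f \in C_b(\mathbb{S}_X)$, and continuity of $g_t$ keeps $g_t f /(p_t g_t)$ continuous; together these ensure the induction remains closed inside $C_b(\mathbb{S}_X)$ rather than escaping into merely measurable functions. A density argument (continuous bounded functions being measure-determining on the Polish space $\mathbb{S}_X$) then upgrades convergence tested on $C_b$ to convergence on all of $B(\mathbb{S}_X)$.

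For part (iii), the multinomial structure gives $c_t = 1$ in \cref{eqn:32} and, more importantly, yields higher-moment bounds of the form $\ex{|\pi_t^n f - \bar\pi_t^n f|^p} \leq C_{p,t}\, n^{-p/2}$ for every $p \geq 2$, obtainable from the Marcinkiewicz–Zygmund inequality applied to the multinomial offspring counts. Choosing any $p > 2$, Markov's inequality together with Borel–Cantelli then yields the almost-sure statement at each step, and the Feller plus likelihood-continuity hypotheses are reused to push the a.s.\ limit through the Bayes update. The main obstacle I anticipate is in part (i): the careful bookkeeping of two independent sources of randomness — the observation vector $Y_{0:t}$ and the particle labels — since unlike the fixed-path setting of \cref{bain:10.7} the denominator $p_t^{n,Y_{0:t-1}} g_t$ is now jointly random in both. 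The uniform lower bound $p_t g_t \geq k_t$ is exactly what permits the Fubini-style interchange of expectations without the denominator degenerating on a set of positive measure; once that single technical point is secured, the remainder is mechanical propagation across a finite horizon.
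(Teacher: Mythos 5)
Your proposal does capture the paper's central technical device: the uniform lower bound $p_t g_t \geq k_t$ is indeed what lets the random normaliser $p_t^n g_t$ be controlled when passing from the fixed-path statement of \cref{bain:10.7} to random observation vectors, and the $1/n$ bounds \cref{eqn:31,eqn:32} derived from \cref{prop:banchingproperties} are the right quantitative input for the recursion. Your route for part (iii) — Marcinkiewicz--Zygmund moment bounds of order $n^{-p/2}$ for the multinomial offspring counts with $p>2$, then Markov plus Borel--Cantelli over a countable convergence-determining class — is a legitimate and more self-contained alternative to the paper, which at this point simply restates \cref{theorem:convergence:almostsure} (Theorem 10.12 and Corollary 10.31 of Bain and Crisan) and combines it with \cref{bain:pfconvergence}.

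There is nevertheless a genuine gap: parts (i) and (ii) are ``if and only if'' statements, and your induction only establishes the sufficiency direction, i.e.\ that (a1) and (b1) imply convergence. The necessity direction --- assuming $\ex{|\pi_t^n f - \pi_t f|}\to 0$ and $\ex{|p_t^n f - p_t f|}\to 0$ and deducing (a1) and (b1) --- is where the paper's proof of \cref{theorem:convergence} spends most of its effort (the triangle-inequality decompositions leading to \cref{eqn:conv:extrareq:1,eqn:conv:extrareq:2}, with $1/(p_t g_t)\leq 1/k_t$ pulled outside the expectation), and your proposal never addresses it. Relatedly, in part (ii) the forward implication is not free ``since $C_b(\mathbb{S}_X)\subset B(\mathbb{S}_X)$'': the hypothesis of (ii) is convergence in expectation, which by \cref{eqn:29} is tested only against $C_b(\mathbb{S}_X)$, so you cannot invoke part (i), whose hypotheses quantify over $B(\mathbb{S}_X)$; the forward direction must be re-run with $C_b$ test functions, which is exactly where the Feller property (to keep $K_{t-1}f$ an admissible test function) and the continuity and boundedness of $g_t$ enter in the paper's commentary on Corollaries 10.10 and 10.30. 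Finally, your closing ``density argument'' upgrading $C_b$-tested convergence to all of $B(\mathbb{S}_X)$ is both unnecessary (the theorem only asserts convergence in expectation in the sense of \cref{eqn:29}) and false in general: convergence of integrals against continuous bounded functions does not imply convergence against arbitrary bounded measurable functions, empirical measures being the standard counterexample.
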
 
In the above, part (i) implies convergence in expectation, but parts (ii) and (iii) are `if and only if' statements for our two notions of convergence. Of these, the requirement that the transition kernel $K_t$ is Feller \cite{bain2009,crisan2002survey} means that $K_t f \in C_b(\mathbb{S}_X)$ for all $f\in B(\mathbb{S}_X)$, \emph{i. e.} the transition kernel does not bring $f$ outside the space of continuous bounded functions. A particle filter is in-scope of \cref{bain:conv_summary} if the branching mechanism satisfy mathematical statements in \cref{prop:banchingproperties}.

\subsection{Supplementary proofs for convergence analysis for particle filters \label{app:bkg:2} }

The first step will be to establish the link between \cref{bain:10.7} and \cref{bain:conv_summary}. Of these, \cref{bain:10.7} is a re-statement of Theorem 10.7 in \cite{bain2009}. This theorem applies to a fixed observation path. In contrast, the proof below establishes \cref{bain:conv_summary} part (i) by extending \cref{bain:10.7} to conditions of convergence for \textit{random} measures. Here, the randomness of measures in \cref{bain:conv_summary} accrues from two sources (a) the random observation vector and (b) the particle approximation for continuous distributions. We will eventually find that the affect of these random sources is to require an additional condition on \cref{bain:conv_summary}. This condition is that for some non-zero constant $k_t$, the expected value of the product $p_tg_t \geq k_t$.

\begin{theorem}[Bain \& Crisan, 2009] \label{theorem:convergence}
	Assume that for any $t\geq 0$, there exists a constant $k_t$ such that $p_tg_t \geq k_t$. Then for all $f \in B(\mathcal{S}_X)$:
	\begin{align}
		\lim_{n\to\infty} \ex{ |\pi_t^n f - \pi_t f| } &= 0 \label{eqn:conv:1}\\
		\lim_{n\to\infty} \ex{ |p_t^n f - p_t f| } &= 0 \label{eqn:conv:2}
	\end{align} hold if and only if the following hold:
	\begin{align}
		\lim_{n\to\infty} \ex{ |\pi_0^n f - \pi_0 f| } &= 0 \label{eqn:conv:3} \\
		\lim_{n\to\infty} \ex{ |p_t^n f - K_{t-1}\pi^{n}_{t-1} f|} &= 0 \label{eqn:conv:4}\\
		\lim_{n\to\infty} \ex{ |\pi_t f - \bar{\pi}^n_t f|} &= 0 \label{eqn:conv:5}
	\end{align} where $\bar{\pi}^n_t := g_t * p_t^n$ and $\bar{\pi}^n_t f =p^n_t(fg_t)/p^n_t g_t $.
\end{theorem}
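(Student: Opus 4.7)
The plan is to prove the equivalence by treating the two directions separately, organising the argument around the two structural identities $p_t = K_{t-1}\pi_{t-1}$ and $\pi_t f = p_t(fg_t)/p_t g_t$. Throughout, the lower bound $p_t g_t \geq k_t$ will enter precisely where the Bayes normalisation $p_t^n g_t$ appears in a denominator, since that is the only place where convergence of the numerator measures can fail to propagate to convergence of ratios. I would track the bookkeeping entirely through the (random) bounded functions $K_{t-1}f$, $fg_t$, and $g_t$, all of which lie in $B(\mathcal{S}_X)$ whenever $f \in B(\mathcal{S}_X)$ (the first by the Markov kernel contraction $\parallel K_{t-1}f\parallel_\infty \leq \parallel f\parallel_\infty$, the second and third by boundedness of $g_t$).

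For the forward direction ($\Rightarrow$), \cref{eqn:conv:3} is just \cref{eqn:conv:1} specialised at $t=0$. For \cref{eqn:conv:4} I would split
\begin{align*}
|p_t^n f - K_{t-1}\pi_{t-1}^n f| \leq |p_t^n f - p_t f| + |(\pi_{t-1} - \pi_{t-1}^n)(K_{t-1}f)|,
\end{align*}
where the first summand vanishes in expectation by \cref{eqn:conv:2} and the second by \cref{eqn:conv:1} applied at $t-1$ to the bounded function $K_{t-1}f$. For \cref{eqn:conv:5} I would write
\begin{align*}
\bar{\pi}_t^n f - \pi_t f = \frac{(p_t^n - p_t)(fg_t) - (\pi_t f)\,(p_t^n - p_t)\,g_t}{p_t^n g_t},
\end{align*}
bound the numerator via \cref{eqn:conv:2} applied to the bounded functions $fg_t$ and $g_t$, and handle the random denominator by partitioning on the event $\{p_t^n g_t \geq k_t/2\}$: on this event the denominator is controlled by $k_t/2$, while on its complement the difference is trivially bounded by $2\parallel f\parallel_\infty$ and a Markov-inequality argument combined with $p_t g_t \geq k_t$ and \cref{eqn:conv:2} forces the probability of the complement to zero.

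For the backward direction ($\Leftarrow$) I would induct on $t$. The base case $t=0$ is exactly \cref{eqn:conv:3}. For the inductive step, \cref{eqn:conv:4} together with the inductive hypothesis (via the same triangle-inequality step as above, now applied in reverse) yields $\ex{|p_t^n f - p_t f|} \to 0$, giving \cref{eqn:conv:2}. Convergence \cref{eqn:conv:1} then follows from \cref{eqn:conv:5} once one identifies $\pi_t^n$ with the Bayes-updated measure $\bar{\pi}_t^n$ (directly in the weighted-particle formulation; otherwise via an auxiliary re-sampling step, which under \cref{prop:banchingproperties} preserves the expectation $\pi_t^n f \to \bar{\pi}_t^n f$). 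The main obstacle is the random denominator $p_t^n g_t$ appearing in the Bayes update: without the uniform lower bound $p_t g_t \geq k_t$, the ratio $p_t^n(fg_t)/p_t^n g_t$ can have arbitrarily heavy tails along realisations where $p_t^n g_t$ is small, and mere convergence in expectation of the numerator is insufficient; the event-splitting argument above is the crucial technical device that converts the deterministic bound $p_t g_t \geq k_t$ into a workable in-expectation estimate.
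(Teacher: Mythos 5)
Your proposal is correct in substance and shares the paper's overall architecture (triangle inequalities keyed to the identities $p_t = K_{t-1}\pi_{t-1}$ and $\pi_t f = p_t(fg_t)/p_t g_t$, with the bound $p_t g_t \geq k_t$ entering only through the Bayes normalisation), and your treatment of \cref{eqn:conv:3,eqn:conv:4} is essentially identical to the paper's. Where you genuinely diverge is the key estimate \cref{eqn:conv:5}: you keep the \emph{empirical} normalisation $p_t^n g_t$ in the denominator and tame it by splitting on the event $\{p_t^n g_t \geq k_t/2\}$, using Markov's inequality together with $p_t g_t \geq k_t$ and \cref{eqn:conv:2} to kill the complementary event. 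The paper instead adds and subtracts $p_t^n(fg_t)/p_t g_t$, so that the only explicit denominator is the \emph{true} $p_t g_t \geq k_t$ (pulled out as $1/k_t$) while the empirical ratio $p_t^n(fg_t)/p_t^n g_t$ is simply bounded by $\parallel f \parallel_\infty$; this avoids any event splitting and gives a one-line in-expectation bound. Both arguments are valid and use the hypothesis in the same essential way; the paper's decomposition is more economical, while yours is the more generic device and would survive in settings where the convenient $\parallel f\parallel_\infty$ bound on the empirical ratio is unavailable.

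On the reverse direction you are actually more explicit than the paper, which disposes of it in one sentence, and you have put your finger on a real subtlety: with \cref{eqn:conv:5} written against the true $\pi_t$ (rather than against $\pi_t^n$, as in Bain \& Crisan's Theorem 10.7, where the third condition reads $\lim_n \ex{|\pi_t^n f - \bar{\pi}_t^n f|}=0$), nothing in the stated hypotheses links $\pi_t^n$ to $\bar{\pi}_t^n$, so deducing \cref{eqn:conv:1} requires an extra ingredient. Your patch --- identifying $\pi_t^n$ with $\bar{\pi}_t^n$ in a weighted formulation, or importing the resampling estimate guaranteed by \cref{prop:banchingproperties} --- supplies that link, but be aware it appeals to structure outside the theorem's literal hypotheses; the clean way to close the loop is to state the third condition in the form $\ex{|\pi_t^n f - \bar{\pi}_t^n f|}\to 0$, after which your induction (base case from \cref{eqn:conv:3}; \cref{eqn:conv:4} plus the inductive hypothesis applied to $K_{t-1}f$ giving \cref{eqn:conv:2}; then the triangle inequality through $\bar{\pi}_t^n$ giving \cref{eqn:conv:1}) goes through exactly as you sketch.
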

\begin{proof}
	Assume \cref{eqn:conv:1,eqn:conv:2} hold. Then \cref{eqn:conv:1} $\implies$ \cref{eqn:conv:3} by setting $t=0$. 	
	
	By the triangle inequality, for all $f \in B(\mathbb{S}_X)$,
	\begin{align}
		&|p_t^n f - K_{t-1}\pi^{n}_{t-1} f| \nonumber \\
		\leq & |p_t^n f - K_{t-1}\pi_{t-1} f| + | K_{t-1}\pi_{t-1} f - K_{t-1}\pi^{n}_{t-1} f|, \\
		\leq & |p_t^n f - p_t f| + | \pi_{t-1} (K_{t-1}f) - \pi^{n}_{t-1} (K_{t-1}f)|. \label{eqn:conv:extrareq:1}
	\end{align} Taking expectations of both sides as $n \to \infty$, the two terms on the right hand side are zero from \cref{eqn:conv:1,eqn:conv:2} and we recover \cref{eqn:conv:4}. 
	
For \cref{eqn:conv:5}, we consider the following expression,
	\begin{align}
		\pi_t f - \bar{\pi}^n_t f = & \frac{p_t(fg_t) }{p_t g_t} - \frac{p^n_t(fg_t)}{p^n_t g_t}, \\
		= &\frac{p_t(fg_t) }{p_t g_t} - \frac{p^n_t(fg_t)}{p^n_t g_t} + \frac{p^n_t(fg_t)}{p_t g_t} - \frac{p^n_t(fg_t)}{p_t g_t}, \\
		= & \frac{1}{p_t g_t} \left( p_t(fg_t) - p^n_t(fg_t)\right) + \frac{p^n_t(fg_t)}{p_t g_t} - \frac{p^n_t(fg_t)}{p^n_t g_t},\\
		= & \frac{1}{p_t g_t} \left( p_t(fg_t) - p^n_t(fg_t)\right) + \frac{p^n_t(fg_t)}{p_t g_tp^n_t g_t} \left(p^n_t g_t - p_t g_t \right).
	\end{align}
	Denote $\frac{p^n_t(fg_t)}{p^n_t g_t} \leq \parallel f_\infty \parallel$ and talking the absolute values and expectation on both sides yields:
	\begin{align}
		&\ex{|\pi_t f - \bar{\pi}^n_t f|} 
		\nonumber \\
		\leq & \ex{\frac{1}{p_t g_t} | p_t(fg_t) - p^n_t(fg_t)|} + \parallel f_\infty \parallel\ex{\frac{ 1}{p_t g_t} |p^n_t g_t - p_t g_t |}, \\
		\leq & \ex{\frac{1}{k_t} | p_t(fg_t) - p^n_t(fg_t)|} + \parallel f_\infty \parallel\ex{\frac{1}{k_t} |p^n_t g_t - p_t g_t |}, \\
		\leq & \frac{1}{k_t} \ex{| p_t(fg_t) - p^n_t(fg_t)|} +\frac{\parallel f_\infty \parallel}{k_t} \ex{ |p^n_t g_t - p_t g_t |}. \label{eqn:conv:extrareq:2}
	\end{align} Here, we invoked the assumption that there exists a constant $k_t$ such that $p_t g_t > k_t > 0$, such that $\frac{1}{k_t}$ is greater than $\frac{1}{p_tg_t}$, and this allows us to bring $k_t$ outside the expectation value.
	In the limit $n \to \infty$, both terms on the right hand side of the last line go to zero by \cref{eqn:conv:2}.
	
	In the reverse direction, we obtain \cref{eqn:conv:2} by applying the triangle inequality, and \cref{eqn:conv:1} follows by induction from using \cref{eqn:conv:3} as a starting point and \cref{eqn:conv:2,eqn:conv:4}. 

\end{proof} The derivation above establishes that \cref{bain:10.7} applies to random measures and \cref{bain:conv_summary} (i).

We now provide additional commentary around the requirements for the transition kernel and likelihood function which appear in both \cref{bain:conv_summary} (ii) and (iii) \cite{bain2009}. To obtain `if and only if' conditions for convergence in expectation using \cref{bain:10.7}, one needs to additionally argue that in proceeding to step \cref{eqn:conv:extrareq:1} that the transition kernel is Feller; and secondly, the limit in \cref{eqn:conv:extrareq:2} for both terms on the right-hand side is zero by invoking the continuity and boundedness of the likelihood function. This re-states Corollary 10.10 in \cite{bain2009} (fixed observation vector). In a similar manner, Corollary 10.30 in \cite{bain2009} (random observation vector) should be read as invoking these additional requirements on the likelihood function and transition kernel. These restrictions on the likelihood function and transition kernel enable us to interpret \cref{bain:conv_summary}(i) and \cref{bain:10.7} as an `if and only if' statement about convergence in expectation.

Next, the following theorem establishes conditions for almost-sure convergence of $p^n_t$ to $p_t$ and $\pi^n_t$ to $\pi_t$, encapsulated by \cref{bain:conv_summary} (iii).
\begin{theorem} [Bain \& Crisan, 2009] \label{theorem:convergence:almostsure}
	Assume that the transition kernel for $X$ is Feller, and the likelihood functions are continuous. Then the sequence $p^n_t$ converges to $p_t$ and $\pi^n_t$ converges to $\pi_t$ almost surely, for all $t \geq 0$, if and only if:
	\begin{align}
		& \lim_{n\to \infty} \pi_0^n = \pi_0 \quad \mathbb{P}-\mathrm{a.s.} \label{app:nmqa2:addin:1}\\
		& \lim_{n\to \infty} d(p^n_t, K_{t-1}\pi^n_{t-1}) = 0 \quad \mathbb{P}-\mathrm{a.s.} \label{app:nmqa2:addin:2}\\
		& \lim_{n\to \infty} d(\pi^n_t, \bar{\pi}^n_{t}) = 0 \quad \mathbb{P}-\mathrm{a.s.} \label{app:nmqa2:addin:3}
	\end{align} where $d(\cdot, \cdot)$ is any metric that generates a weak topology on the space of finite measures, and $\mathcal{M}$ is a convergence determining set in $C_b(\mathbb{S}_X)$. 
	\begin{proof}
		This is a re-statement of Theorem 10.12 (fixed observation vector) and Proposition 15 in \cite{bain2009} (random observation vector); proofs can be found within the reference. 
	\end{proof}
\end{theorem}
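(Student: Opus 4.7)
The plan is to prove both directions by unfolding the definition of almost-sure weak convergence of random measures and then bootstrapping step $t-1$ to step $t$ using the two standing hypotheses (Feller $K_t$ and continuous bounded $g_t$) together with the triangle inequality for the metric $d$. Recall that almost-sure weak convergence $\mu^n \to \mu$ means there is a probability-one event on which $\mu^n f \to \mu f$ for every $f$ in a countable convergence-determining set $\mathcal{M} \subset C_b(\mathbb{S}_X)$, so all assertions below are understood on the intersection of countably many such events.

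For the forward direction (necessity), specializing $\pi_t^n \to \pi_t$ a.s.\ to $t=0$ gives condition \cref{app:nmqa2:addin:1} immediately. For \cref{app:nmqa2:addin:2} I would apply
\[
d(p_t^n, K_{t-1}\pi_{t-1}^n) \leq d(p_t^n, p_t) + d(K_{t-1}\pi_{t-1}, K_{t-1}\pi_{t-1}^n),
\]
where the first term vanishes by hypothesis and the second vanishes because the Feller property of $K_{t-1}$ makes $\mu \mapsto K_{t-1}\mu$ continuous on $(P(\mathbb{S}_X),d)$, so that $\pi_{t-1}^n \to \pi_{t-1}$ a.s.\ lifts to $K_{t-1}\pi_{t-1}^n \to K_{t-1}\pi_{t-1}= p_t$ a.s. For \cref{app:nmqa2:addin:3} I would use the analogous decomposition against $\bar{\pi}_t^n := g_t * p_t^n$, invoking that continuity of $g_t$ together with $p_t g_t \geq k_t > 0$ makes the projective product $p \mapsto g_t * p$ continuous at $p_t$, so $p_t^n \to p_t$ a.s.\ implies $\bar{\pi}_t^n \to g_t * p_t = \pi_t$ a.s.

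For the reverse direction (sufficiency), I would proceed by induction on $t$. The base case $t=0$ is exactly \cref{app:nmqa2:addin:1}. For the inductive step, assume $\pi_{t-1}^n \to \pi_{t-1}$ a.s. The Feller property yields $K_{t-1}\pi_{t-1}^n \to p_t$ a.s., and combining this with \cref{app:nmqa2:addin:2} via the triangle inequality produces $p_t^n \to p_t$ a.s. Continuity of the projective product then gives $\bar{\pi}_t^n \to \pi_t$ a.s., and a final triangle-inequality application with \cref{app:nmqa2:addin:3} delivers $\pi_t^n \to \pi_t$ a.s., closing the induction.

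The main obstacle is the technical justification that the projective product is a.s.\ continuous at $p_t$ in the weak topology. Since for each $f \in \mathcal{M}$ the map $p \mapsto (p * g_t) f = p(fg_t)/p(g_t)$ is a ratio, continuity requires the numerator (which is fine because $fg_t \in C_b(\mathbb{S}_X)$) together with a denominator that stays bounded away from zero. The hypothesis $p_t g_t \geq k_t > 0$ inherited from \cref{theorem:convergence} ensures $p_t^n g_t$ is eventually bounded below on the probability-one event where $p_t^n g_t \to p_t g_t$, and this is the single place where the strict positivity condition on the likelihood is indispensable. The secondary bookkeeping issue, stitching together the countably many exceptional null sets (one per $f \in \mathcal{M}$ and per $t$), is routine once $\mathcal{M}$ is chosen countable, which is always possible for a Polish state space.
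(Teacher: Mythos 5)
The paper does not actually prove this statement itself—its ``proof'' is a citation to Theorem 10.12 and the corresponding random-observation result in Bain \& Crisan—and your sketch reconstructs essentially the argument used there: necessity via the triangle inequality for $d$ together with weak continuity of $\mu \mapsto K_{t-1}\mu$ (Feller) and of the projective product $p \mapsto g_t * p$, and sufficiency by induction on $t$, with the null-set bookkeeping handled through a countable convergence-determining set. The one point to tighten is the normalizer: the deterministic bound $p_t g_t \geq k_t$ is a hypothesis of the expectation-convergence theorem and is not needed here; for the almost-sure statement it suffices that $p_t g_t > 0$ (guaranteed in this manuscript by the standing assumption, stated earlier in the appendix, that $g_t$ is bounded, continuous and strictly positive) combined with $p_t^n g_t \to p_t g_t$ on the almost-sure event, which is in fact exactly what your continuity argument uses.
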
 In the above, the first condition in \cref{app:nmqa2:addin:1} establishes that we start from a good approximation of $\pi_0$. The remaining conditions state two requirements which must be met in satisfying the recursion relations for our approximations of the true posterior $\pi_t^n$. The condition in \cref{app:nmqa2:addin:2} states that the empirical predictive distribution and the transition kernel enable us to track the posterior from $t$ to $t+1$ `closely enough'. The condition in \cref{app:nmqa2:addin:3} states that empirical distributions before ($\bar{\pi}_t^n$) and after ($\pi_t^n$) the branching mechanism in the particle measure do not deviate within each iteration $t$. While this branching mechanism can be designed arbitrarily, the branching mechanism satisfying \cref{prop:banchingproperties} has desirable properties for convergence, including the identities below.

\begin{lemma} The following identities hold for the empirical distributions of a particle filter with the branching mechanism described in \cref{prop:banchingproperties}.
	\begin{align}
		\ex{p_t | \mathcal{G}_{t-1}} &= K_{t-1} \pi_{t-1}^n \label{eqn:identity:dynamics}\\
		\pi_t^n &= \frac{1}{n}\sum_{i=1}^n \xi_t^{(i)} \delta(\bar{x}_t^{(i)}) \label{eqn:rewrite_empirical_posterior_as_pibar}
	\end{align}
	\begin{proof}
		The first identity is obtained by substituting definitions for empirical measures as
		\begin{align}
			p_t^n &:= \frac{1}{n}\sum_{i=1}^n \delta(\bar{x}_t^{(i)}) = \frac{1}{n}\sum_{i=1}^n \delta(K_{t-1} x_{t-1}^{(i)} ), \label{eqn:identity:dynamics:substep}\\
			\implies \ex{p_t | \mathcal{G}_{t-1}} &= \frac{1}{n}\sum_{i=1}^n \ex{\delta(K_{t-1} x_{t-1}^{(i)} ) | \mathcal{G}_{t-1}}, \\
			&= \frac{1}{n}\sum_{i=1}^n K_{t-1} \delta( x_{t-1}^{(i)} ). \\
			&= K_{t-1} \frac{1}{n}\sum_{i=1}^n \delta( x_{t-1}^{(i)} ), \\
			&= K_{t-1} \pi_{t-1}^n.
		\end{align}
		For the second identity, each particle $\bar{x}_t^{(i)}$ replaces itself $\xi_t^{(i)}$ number of times, that is, $\bar{x}_t^{(i)}$ is counted up $\xi_t^{(i)}$ number of times in the re-sampled posterior $\pi_t^n$. Hence we can write the re-sampled posterior as
		\begin{align}
			\pi_t^n &\equiv \sum_{i=1}^{n} \mathbb{P}(\bar{x}_t^{(i)} \mathrm{chosen})	\cdot \delta(\bar{x}_t^{(i)}), \\
			&= \sum_{i=1}^{n} \left(\frac{\xi_t^{(i)}}{\sum_{k=1}^n {\xi_t^{(k)}}} \right)	\delta(\bar{x}_t^{(i)}), \\
			&= \sum_{i=1}^{n} \left(\frac{\xi_t^{(i)}}{n} \right)	\delta(\bar{x}_t^{(i)}) .
		\end{align} The last line follows from the previously stated assumption that the number of particles are held constant after each re-sampling step. 
	\end{proof}
\end{lemma}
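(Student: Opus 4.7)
Both identities are structural statements about the empirical distributions at the boundaries of iteration $t$: the first is a conditional-expectation identity for the empirical predictive measure, and the second is a combinatorial re-expression of the post-resampling posterior. My plan is to unpack the definitions given earlier in the manuscript for $p_t^n$, $\pi_t^n$ and the offspring counts $\xi_t^{(i)}$, and then invoke property (i) of \cref{prop:banchingproperties} where needed.

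For the first identity, I would start from the definition $p_t^n = \frac{1}{n}\sum_{i=1}^n \delta(\bar{x}_t^{(i)})$, noting that each $\bar{x}_t^{(i)}$ is produced by sampling from the transition kernel $K_{t-1}(x_{t-1}^{(i)}, \cdot)$ given its parent. Because each parent $x_{t-1}^{(i)}$ is $\mathcal{G}_{t-1}$-measurable by the construction of the filtration in \cref{eqn:27}, conditioning on $\mathcal{G}_{t-1}$ holds the parents fixed but integrates over the fresh kernel randomness used to propagate them. Applying conditional expectation term by term, using the defining relation $\ex{\delta(\bar{x}_t^{(i)}) \mid \mathcal{G}_{t-1}} = K_{t-1}\delta(x_{t-1}^{(i)})$, and then pulling $K_{t-1}$ out of the sum via linearity will yield $K_{t-1}\pi_{t-1}^n$, which is precisely the claim.

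For the second identity, the argument is essentially combinatorial and relies on property (i) of \cref{prop:banchingproperties}, which guarantees $\sum_{i=1}^n \xi_t^{(i)} = n$. The posterior $\pi_t^n$ is by definition a uniform empirical distribution over the $n$ offspring produced by the branching step; since all $\xi_t^{(i)}$ descendants of parent $i$ are located at the same point $\bar{x}_t^{(i)}$, collecting identical delta masses across offspring immediately produces $\pi_t^n = \frac{1}{n}\sum_{i=1}^n \xi_t^{(i)}\,\delta(\bar{x}_t^{(i)})$.

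The main subtlety lies in the first identity: I need to be careful about exactly which $\sigma$-field is being conditioned on, and to verify that the measurability hypotheses behind $\ex{\delta(\bar{x}_t^{(i)}) \mid \mathcal{G}_{t-1}} = K_{t-1}(x_{t-1}^{(i)}, \cdot)$ hold uniformly across particle labels so the sum can be exchanged with conditional expectation. The second identity is really an accounting statement and becomes transparent once constant particle count is invoked; the only thing to watch is that distinct indices $i$ may share the same location $\bar{x}_t^{(i)}$, but this does not alter the combinatorial count and so does not affect the final expression.
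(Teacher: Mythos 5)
Your proposal is correct and follows essentially the same route as the paper: for the first identity, condition on $\mathcal{G}_{t-1}$ term by term in $p_t^n = \frac{1}{n}\sum_i \delta(\bar{x}_t^{(i)})$, use that the conditional law of each $\bar{x}_t^{(i)}$ given its $\mathcal{G}_{t-1}$-measurable parent is $K_{t-1}(x_{t-1}^{(i)},\cdot)$, and pull $K_{t-1}$ out by linearity; for the second, count the $\xi_t^{(i)}$ identical delta masses at each $\bar{x}_t^{(i)}$ and normalize by $n$ via property (i) of \cref{prop:banchingproperties}, exactly as the paper does when it replaces $\sum_k \xi_t^{(k)}$ by $n$. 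No gaps to report.
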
 

We now link the branching mechanism of the particle filter to the main convergence theorems listed previously. 
\begin{theorem} [Bain \& Crisan, 2009] \label{bain:pfconvergence}
	Assume that for all $t$, there exists a constant $k_t$ such that $p_tg_t \geq k_t$ and the covariance matrix of a branching mechanism satisfies $q^t A_t^n q \leq n c_t$ . Then for a random observation vector $\pi_t^n \equiv \pi_t^{n, Y_{0:t}}$ and $p_t^n \equiv p_t^{n, Y_{0:t}}$ yield
	\begin{align}
		\lim_{n\to \infty} \ex{|\pi_t^n f - \pi_t f|} &= 0, \\
		\lim_{n\to \infty} \ex{|p_t^n f - p_t f|} &= 0,
	\end{align}
	for all $f \in B(\mathbb{S}_X), t \geq 0$. 
	
	\begin{proof}
		Following the proof structure suggested in \cite{bain2009}, we use \cref{theorem:convergence}. First, we observe show convergence for the initial conditions. Since $\delta(x_0^{(i)}) \sim \pi_0$ and $\pi_0^n(1) = 1, \forall n$ , then the dominated convergence theorem for measure valued random variables applies
		\begin{align}
			\lim_{n \to \infty} \pi_0^n = \pi_0 \implies \lim_{n \to \infty} \ex{|\pi_0^n f - \pi_0f|} = 0, \quad \mathbb{P}-\mathrm{a.s.}
		\end{align} This verifies \cref{eqn:conv:3}.
		For \cref{eqn:conv:4}, we use \cref{eqn:identity:dynamics}, $\ex{p_t f | \mathcal{G}_{t-1}} = \pi_{t-1}^n (K_{t-1}f) \equiv \mu$,
		\begin{align}
			&\ex{(p^n_tf - \pi_{t-1}^n (K_{t-1}f))^2 | \mathcal{G}_{t-1}} \nonumber \\
            =&  \ex{(p^n_tf)^2 | \mathcal{G}_{t-1}} - \mu^2, \\
			=& \ex{(p^n_tf)^2 | \mathcal{G}_{t-1}} - \pi_{t-1}^n (K_{t-1}f)^2, \\
			=& \frac{1}{n^2} \ex{\sum_{i=1}^n \sum_{j=1}^n f(\bar{x}_t^{(i)}) f(\bar{x}_t^{(j)}) | \mathcal{G}_{t-1}} - \pi_{t-1}^n (K_{t-1}f)^2 \label{bain:pfconvergence:step:1},\\
			=& \frac{1}{n^2} \sum_{i=1}^n \ex{(f(\bar{x}_t^{(i)}))^2 | \mathcal{G}_{t-1}} - \pi_{t-1}^n (K_{t-1}f)^2 \label{bain:pfconvergence:step:2},\\
			=& \frac{1}{n} \pi_{t-1}^n \left(K_{t-1} f^2-(K_{t-1}f)^2\right) \label{bain:pfconvergence:step:3},\\
			&\implies \ex{(p^n_tf - \pi_{t-1}^n (K_{t-1}f))^2} \leq \frac{\parallel f \parallel_\infty^2}{n}.
		\end{align} In going from \cref{bain:pfconvergence:step:1} to \cref{bain:pfconvergence:step:2}, the sums in the first term are brought outside of the expectation value due to independence of samples $\bar{x}_t^{(i)}$ conditional on $ \mathcal{G}_{t-1}$, invoking both linearity of expectation values and independence of the predictive samples $f(\bar{x}_t^{(\cdot)})$ at $t$ conditional on $ \mathcal{G}_{t-1}$. The first term in \cref{bain:pfconvergence:step:3} follows by applying the same reasoning to \cref{eqn:identity:dynamics:substep}. Finally, in the last step, we assumed that $f$ is a bounded, Borel measurable function \cite{szekeres2004course}, and there exists $ f_\infty$, a the limiting vector for all $t, n$ such that $\pi^n_t f \leq \parallel f \parallel_\infty, \forall n$ \cite{bain2009}.
		
		The limit in \cref{eqn:conv:4} is implied by the last line since for any random variable, $x$, $0 \leq \mathbb{V}\{x\} \leq \ex{x^2}$. Almost sure convergence of the second moment and variance of $x$ to zero implies $\ex{x}^2$ almost surely; letting $x := p^n_tf - \pi_{t-1}^n (K_{t-1}f)$ implies \cref{eqn:conv:4}.
		
		For \cref{eqn:conv:5}, one requires that
	 \begin{align}
		\ex{|\pi_t f - \bar{\pi}^n_t f|} \leq \ex{|\pi_t f - \pi^n_t f|} + \ex{|\pi^n_t f - \bar{\pi}^n_t f|}.
	\end{align} In the above, the first term on the right hand side goes to zero by \cref{eqn:conv:1}. Hence, we need only to show $\ex{|\pi^n_t f - \bar{\pi}^n_t f|} \to 0$ as $n \to \infty$. We derive an analogous result for $\ex{(\pi_t^nf - \bar{\pi}_t^nf)^2}$, by re-writing key quantities in terms of $n$ dimensional vectors whose elements represent individual particles. In the equations below, the probability of choosing particle $\bar{x}_t^{(i)}$ is proportional to its weight $G_t^{(i)}$, $W_t$ is the vector of all particle weights, and $\xi_t$ is a vector containing frequencies of particles after re-sampling. These definitions are provided below, 
		\begin{align}
			(\pi_t^nf - \bar{\pi}_t^nf)^2 &:= (W_t f_t)^T (W_t f_t), \label{app:nmqa2:defadd:1} \\
			f_t &:= f(\delta(\bar{x_t})), \label{app:nmqa2:defadd:2}\\
			\delta(\bar{x_t}) & = [\delta(\bar{x_t}^{(1)}), \hdots, \delta(\bar{x_t}^{(n)})]^T, \label{app:nmqa2:defadd:3}\\		
			W_t &:= \frac{\xi_t - nw_t}{n}, \label{app:nmqa2:defadd:4}\\
			\xi_t &:= [ \xi_t^{(1)}, \hdots, \xi_t^{(n)}], \label{app:nmqa2:defadd:5}\\
			G_t &:= [G_t^{(1)}, \hdots, G_t^{(n)}] \label{app:nmqa2:defadd:6},
		\end{align} while the additional quantities $c_t$ and $A^n_t$ are defined by Condition 3 in \cref{prop:banchingproperties}. Further since $f$ is a bounded, Borel measurable function, we assume there exists $ f_\infty \geq f(\delta(\bar{x_t})) $ i.e. the limiting vector for all $t, n$. Using the above definitions, 
		\begin{align}
			& \ex{(\pi_t^nf - \bar{\pi}_t^nf)^2} \nonumber \\
            & = \ex{(W_t f_t)^T (W_t f_t)} \label{app:nmqa2:addin:4},\\
			& = \frac{1}{n^2}\ex{f_t^T (\xi_t - nw_t)^T(\xi_t - nw_t) f_t } \label{app:nmqa2:addin:5}, \\
			& \leq \frac{1}{n^2}\ex{f_\infty^T (\xi_t - nw_t)^T(\xi_t - nw_t) f_\infty } \label{app:nmqa2:addin:6}, \\
			& = \frac{1}{n^2} f_\infty^T \ex{ (\xi_t - nw_t)^T(\xi_t - nw_t) } f_\infty \label{app:nmqa2:addin:7}, \\
			& = \frac{1}{n^2} f_\infty^T A_t^n f_\infty \label{app:nmqa2:addin:8}, \\
			& = \frac{\parallel f\parallel_\infty^2}{n^2} \frac{f_\infty^T}{\parallel f\parallel_\infty} A_t^n \frac{f_\infty}{\parallel f\parallel_\infty} \label{app:nmqa2:addin:9},\\
			& = \frac{\parallel f\parallel_\infty^2}{n^2} q^T A_t^n q, \quad q = \frac{f_\infty}{\parallel f\parallel_\infty} \label{app:nmqa2:addin:10},\\
			& \leq \frac{\parallel f\parallel_\infty^2}{n^2} c_t n	\label{app:nmqa2:addin:11}, \\
			%&= \frac{c_t \parallel f\parallel_\infty^2}{n} \\
			&\implies \ex{(\pi_t^nf - \bar{\pi}_t^nf)^2}  \leq \frac{c_t \parallel f\parallel_\infty^2}{n} \label{eqn:convergence:lemma:pibar:squared}.
		\end{align} In the above, \cref{eqn:rewrite_empirical_posterior_as_pibar} enables one to rewrite the left hand side in terms of $W_t$ in \cref{app:nmqa2:addin:4}. These expressions involving $W_t$ are expanded out subsequent steps, where $f$ is replaced by its limiting vector in \cref{app:nmqa2:addin:6}, and this nonrandom quantity is brought outside the expectation value in \cref{app:nmqa2:addin:7}. The definition of $A^n_t$ in Condition 3 of \cref{prop:banchingproperties} is used in \cref{app:nmqa2:addin:8}. These terms are arranged in \cref{app:nmqa2:addin:9,app:nmqa2:addin:10} by defining a nonrandom quantity $q$. The inequality in Condition 3 of \cref{prop:banchingproperties} is used to obtain \cref{app:nmqa2:addin:11}, leading to $\frac{1}{n}$ dependence in the final step.
 %Thus \cref{eqn:conv:5} is true by \cref{eqn:convergence:lemma:pibar:squared} and the proof given under \cref{theorem:convergence}.
	\end{proof}
\end{theorem}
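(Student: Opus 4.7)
The plan is to invoke the equivalence in \cref{theorem:convergence} (valid under the hypothesis $p_t g_t \geq k_t$) and reduce the claim to verifying its three sufficient conditions---initial, predictive, and branching---at every $t$. Once all three are in place, induction on $t$ yields $\ex{|\pi_t^n f - \pi_t f|}\to 0$ and $\ex{|p_t^n f - p_t f|}\to 0$ for every $f \in B(\mathbb{S}_X)$, which is the desired statement.

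The initial and predictive conditions are essentially bookkeeping. Since $\pi_0^n$ is the empirical measure of $n$ i.i.d.\ draws from $\pi_0$, Chebyshev together with bounded convergence delivers \cref{eqn:conv:3}. For \cref{eqn:conv:4}, the approach is to condition on $\mathcal{G}_{t-1}$: conditionally on that filtration the propagated particles $\bar{x}_t^{(i)} \sim K_{t-1}(x_{t-1}^{(i)}, \cdot)$ are independent, so using the identity $\ex{p_t^n f \mid \mathcal{G}_{t-1}} = \pi_{t-1}^n(K_{t-1}f)$ from the preceding lemma, a variance computation yields
\begin{equation*}
\ex{\bigl(p_t^n f - \pi_{t-1}^n(K_{t-1}f)\bigr)^2 \mid \mathcal{G}_{t-1}} \;=\; \tfrac{1}{n}\,\pi_{t-1}^n\!\bigl(K_{t-1}f^2-(K_{t-1}f)^2\bigr)\;\leq\; \tfrac{\parallel f\parallel_\infty^2}{n},
\end{equation*}
and Jensen combined with an outer expectation delivers $\ex{|p_t^n f - \pi_{t-1}^n(K_{t-1}f)|}=O(n^{-1/2})\to 0$.

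The substantive step is the branching condition \cref{eqn:conv:5}. My plan is to split $\ex{|\pi_t f - \bar{\pi}_t^n f|}\leq \ex{|\pi_t f - \pi_t^n f|} + \ex{|\pi_t^n f - \bar{\pi}_t^n f|}$; the first summand vanishes by the induction hypothesis, so attention focuses on the second. Recasting $\pi_t^n f - \bar{\pi}_t^n f = n^{-1}\sum_{i} (\xi_t^{(i)} - nG_t^{(i)})\, f(\bar{x}_t^{(i)})$ via the identities lemma, the conditional (on $\bar{\mathcal{G}}_t$) second moment becomes the bilinear form $n^{-2}\,\mathbf{f}^T A_t^n \mathbf{f}$ with $\mathbf{f}_i = f(\bar{x}_t^{(i)})$. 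Setting $q = \mathbf{f}/\parallel f\parallel_\infty$ (modulo a cosmetic rescaling so the entries satisfy the strict bound $|q^{(i)}|<1$) and invoking \cref{prop:banchingproperties}(iii) gives $\ex{(\pi_t^n f - \bar{\pi}_t^n f)^2}\leq c_t \parallel f\parallel_\infty^2 /n$, which Jensen promotes to the required $L^1$ decay.

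The main obstacle is precisely this branching step. The key insight is that condition (iii) of \cref{prop:banchingproperties} is tailor-made to control exactly this bilinear form, and it must be applied \emph{conditionally} on $\bar{\mathcal{G}}_t$ so that the random particle positions and weights can be treated as fixed inputs and only the branching counts $\xi_t$ are averaged at that stage. Taking outer expectations afterward then requires the uniform lower bound $p_t g_t \geq k_t$---inherited when $\bar{\pi}_t^n$ is expressed as a normalized weighted measure---to keep the estimates independent of the random observation record $Y_{0:t}$ and close the induction.
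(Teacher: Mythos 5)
Your proposal follows essentially the same route as the paper's proof: reduce to the three conditions of \cref{theorem:convergence}, verify the initial condition from i.i.d.\ sampling, obtain the predictive bound $\|f\|_\infty^2/n$ by conditioning on $\mathcal{G}_{t-1}$ and computing the conditional variance, and control the branching step by writing $\pi_t^n f - \bar{\pi}_t^n f$ as a bilinear form in $(\xi_t - nG_t)$ bounded via Condition 3 of \cref{prop:banchingproperties}, giving $c_t\|f\|_\infty^2/n$. Your explicit appeal to the induction hypothesis for the first triangle-inequality term (rather than citing \cref{eqn:conv:1} directly) is a slightly cleaner phrasing of the same argument, so no substantive difference.
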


\cref{bain:pfconvergence} establishes the link between branching mechanisms in \cref{prop:banchingproperties} and \cref{bain:conv_summary} in the main text.

We now focus on branching mechanisms that are multinomial random processes. In this special case, \cref{prop:banchingproperties} are satisfied and the constant $c_t$ can be given a value to further specify the rate of convergence. Both of these results can be found in \cite{bain2009} and they are listed for completeness below. 

\begin{lemma} [Bain \& Crisan, 2009]
	If the offspring distributions are multinomial, then all properties of the branching mechanism are satisfied. 
	\begin{proof}
		Let $\xi_t^{(i)}$ in \cref{app:nmqa2:defadd:5} be a multinomially distributed where the probability of choosing particle $\bar{x}_t^{(i)}$ is proportional to its weight $G_t^{(i)}$ in \cref{app:nmqa2:defadd:6}, 
		\begin{align}
			\mathbb{P}(\xi_t^{(i)} = m^{(i)}) := \frac{n!}{\prod_{i=1}^{k} m^{(i)}!} \prod_{i=1}^{k}( G_t^{(i)})^{m^{(i)}}.
		\end{align} Then by the properties of the multinomial distribution, we obtain
		\begin{align}
			\ex{\xi_t^{(i)} | \bar{\mathcal{G}}_t } &= n G_t^{(i)}, \\
			% Defined in Prop 5. main text A_t^n & := \ex{(\xi_t - nw_t)^T(\xi_t - nw_t)} \\
			(A_t^n)_{i,j} &=\begin{cases}
				n G_t^{(i)} (1 - G_t^{(i)}), \quad i =j \\
				- n G_t^{(i)} G_t^{(j)}, \quad i \neq j,
			\end{cases}
			% Define in previous theorem \xi_t &:= [ \xi_t^{(1)}, \hdots, \xi_t^{(n)}]\\
			% Defined in previous theorem G_t &:= [G_t^{(1)}, \hdots, G_t^{(n)}] 
		\end{align} where definitions in \cref{prop:banchingproperties,app:nmqa2:defadd:5,app:nmqa2:defadd:6} have been used. This implies
		\begin{align}
			q^T A_t^n q_t & := \sum_{i=1}^n \sum_{j=1}^n q_i q_j (A_t^n)_{i,j}, \\
			&= \sum_{i=1}^n q_i q_i n G_t^{(i)} (1 - G_t^{(i)}) - \sum_{i\neq j}^n \sum_{j=1}^n n q_i q_j G_t^{(i)} G_t^{(j)}, \\
			&= \sum_{i=1}^n q_i q_i n G_t^{(i)} - \sum_{i= 1}^n \sum_{j=1}^n n q_i q_j G_t^{(i)} G_t^{(j)} ,\\
			&= n \sum_{i=1}^n (q_i)^2 G_t^{(i)} - n (\sum_{i= 1}^n q_i G_t^{(i)})^2 ,\\
			&\leq n \sum_{i=1}^n (q_i)^2 G_t^{(i)} ,\\
			&\leq n \sum_{i=1}^n G_t^{(i)}, \quad |q_i| \leq 1 \forall i,\\
			&= n,\\
			\implies c_t &= 1 \quad \forall t.
		\end{align} 
	\end{proof}
\end{lemma}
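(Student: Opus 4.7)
The plan is to verify the three conditions in \cref{prop:banchingproperties} one by one by specializing to the case where $\xi_t = (\xi_t^{(1)}, \ldots, \xi_t^{(n)})$ is drawn, conditionally on $\bar{\mathcal{G}}_t$, from $\mathrm{Multinomial}(n; G_t^{(1)}, \ldots, G_t^{(n)})$. Everything then reduces to standard moment identities for this distribution, with the weights $G_t^{(i)}$ treated as deterministic under the conditioning.

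First I would dispatch the first two properties, which are essentially definitional. Conservation, $\sum_i \xi^{(i)} = n$, is built into the multinomial by construction (the total trial count is fixed at $n$), giving Property~(i). Property~(ii), the conditional mean $\ex{\xi^{(i)} \mid \bar{\mathcal{G}}_t} = n G_t^{(i)}$, follows because each marginal $\xi^{(i)}$ is binomial with parameters $(n, G_t^{(i)})$.

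The only step requiring care is Property~(iii). My plan is to write out the conditional covariance matrix using the standard multinomial formulas (variance $nG_t^{(i)}(1-G_t^{(i)})$ on the diagonal and covariance $-nG_t^{(i)}G_t^{(j)}$ off the diagonal) and evaluate $q^T A_t^n q$ directly. The key algebraic observation is that the diagonal and off-diagonal contributions can be combined into
\begin{equation*}
q^T A_t^n q \;=\; n\sum_i q_i^2\, G_t^{(i)} \;-\; n\Bigl(\sum_i q_i G_t^{(i)}\Bigr)^{\!2},
\end{equation*}
after which the second term is a (negative) perfect square that may be discarded, and the first is bounded by $n\sum_i G_t^{(i)} = n$ using $q_i^2 \le 1$ together with the normalisation $\sum_i G_t^{(i)} = 1$. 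This yields the desired bound with $c_t = 1$.

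I do not anticipate a real obstacle: the entire argument is three lines once the multinomial moment identities are invoked. The only point requiring slight attention is recognising that the mixed-index sum and the diagonal $(G_t^{(i)})^2$ terms combine into a perfect square, which is what makes the bound tight enough to give a constant $c_t$ independent of $n$ and $t$, and in particular produces the $1/n$ convergence rate used downstream in \cref{bain:pfconvergence}.
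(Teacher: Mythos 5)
Your proposal is correct and follows essentially the same route as the paper: invoke the standard multinomial moment identities for the conditional mean and covariance, combine the diagonal and off-diagonal terms of $q^T A_t^n q$ into $n\sum_i q_i^2 G_t^{(i)} - n\bigl(\sum_i q_i G_t^{(i)}\bigr)^2$, discard the negative perfect square, and bound the remainder by $n$ using $|q_i|\le 1$ and $\sum_i G_t^{(i)}=1$, yielding $c_t=1$. The only minor difference is that you explicitly note the particle-conservation and mean conditions, which the paper's proof treats as immediate consequences of the multinomial definition.
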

The results thus far establish \cref{bain:conv_summary} (ii).
\begin{corollary}
	If the off-spring distributions are multinomial and under the conditions of \cref{theorem:convergence:almostsure} for the random observation vector, then 
	\begin{align}
		\lim_{n \to \infty} & p_t^{n} = p_t, \quad \mathbb{P}-\mathrm{a.s.} \\
		\lim_{n \to \infty} & \pi_t^{n} = \pi_t, \quad \mathbb{P}-\mathrm{a.s.}
	\end{align}
	\begin{proof}
		This is a partial restatement of Corollary 10.31 in \cite{bain2009}. It follows from \cref{theorem:convergence:almostsure} and \cref{bain:pfconvergence}.
	\end{proof}
\end{corollary}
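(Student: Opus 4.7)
The plan is to derive this Corollary by verifying the three hypotheses of \cref{theorem:convergence:almostsure} in the specific case of multinomial off-spring distributions, and then invoking that theorem. Concretely, I need to show (i) almost-sure weak convergence $\pi_0^n \to \pi_0$, (ii) almost-sure vanishing of $d(p_t^n, K_{t-1}\pi_{t-1}^n)$, and (iii) almost-sure vanishing of $d(\pi_t^n, \bar{\pi}_t^n)$, where $d$ metrizes weak convergence on $P(\mathbb{S}_X)$. The random-observation hypothesis $p_t g_t \geq k_t$ and the Feller/continuity conditions on $K_t$ and $g_t$ are inherited from \cref{theorem:convergence:almostsure}.

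Condition (i) follows because the initial particles are drawn i.i.d.\ from $\pi_0$, so the classical strong law of large numbers applied to a countable convergence-determining set $\{f_k\}_{k\geq 1} \subset C_b(\mathbb{S}_X)$, together with a union bound on null sets, yields almost-sure weak convergence. For conditions (ii) and (iii) the key inputs are already in hand: \cref{bain:pfconvergence} furnishes the second-moment bounds $\ex{(p_t^n f - \pi_{t-1}^n (K_{t-1}f))^2} \leq \parallel f \parallel_\infty^2/n$ and $\ex{(\pi_t^n f - \bar{\pi}_t^n f)^2} \leq c_t \parallel f \parallel_\infty^2/n$, and the preceding Lemma pins down $c_t = 1$ under multinomial branching.

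To promote these $L^2$ bounds to almost-sure convergence, I would apply a Borel--Cantelli argument. The cleanest route is to exploit the multinomial structure directly: conditional on $\bar{\mathcal{G}}_t$, the offspring counts $\xi_t^{(i)}$ are functionals of $n$ i.i.d.\ categorical draws, so Hoeffding's inequality gives an exponential-in-$n$ tail bound on $|\pi_t^n f - \bar{\pi}_t^n f|$ for any bounded $f$, which is trivially summable in $n$. An analogous independence argument at the prediction step, using that the samples $\bar{x}_t^{(i)} \sim K_{t-1}(x_{t-1}^{(i)}, \cdot)$ are conditionally i.i.d.\ given $\mathcal{G}_{t-1}$, handles (ii) by the same route. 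Applying each bound to every $f_k$ in the countable determining set and taking a union bound over the resulting null sets upgrades pointwise almost-sure convergence to almost-sure convergence in the weak topology $d$.

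The main obstacle is the moment-strengthening step: the bare $L^2$ bound of \cref{bain:pfconvergence} gives only a $1/n$ rate, which is not summable in $n$, so without the Hoeffding strengthening one would be stuck with a subsequence detour (say $n_m = m^2$) and would then need additional work to fill in the gaps between consecutive subsequence points. Once conditions (i)--(iii) hold almost surely, \cref{theorem:convergence:almostsure} immediately delivers the two claimed limits, with the recursion for $\pi_t^n$ and $p_t^n$ propagating the convergence inductively from $t-1$ to $t$ under the Feller-kernel and continuous-likelihood hypotheses.
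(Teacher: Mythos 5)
Your proposal is correct, but it takes a genuinely different route from the paper's proof, which is essentially a citation: the paper declares the corollary a partial restatement of Corollary 10.31 of \cite{bain2009} and asserts it ``follows from'' \cref{theorem:convergence:almostsure} and \cref{bain:pfconvergence} without exhibiting the argument. You instead verify the three almost-sure hypotheses of \cref{theorem:convergence:almostsure} directly, and in doing so you correctly identify and repair the real technical gap that the paper's one-line proof glosses over: the mean-square bounds of \cref{bain:pfconvergence} (the $1/n$ rates in \cref{eqn:31,eqn:32}, with $c_t=1$ for multinomial branching) give only convergence in probability, since $1/n$ is not summable, so they cannot by themselves deliver the $\mathbb{P}$-a.s.\ conditions of \cref{theorem:convergence:almostsure}. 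Your fix---exploiting that, conditionally on $\bar{\mathcal{G}}_t$ (resp.\ $\mathcal{G}_{t-1}$), multinomial resampling is an average over $n$ i.i.d.\ categorical draws (resp.\ the predictive samples are conditionally independent and bounded), so Hoeffding gives exponential tails, Borel--Cantelli gives pointwise a.s.\ convergence for each test function, and a countable convergence-determining subset of $C_b(\mathbb{S}_X)$ (available since $\mathbb{S}_X$ is Polish) upgrades this to a.s.\ convergence in the metric $d$---is the standard strengthening used in the underlying literature (Bain--Crisan and \cite{crisan2002survey} achieve the same end via summable higher-moment or exponential bounds). What each approach buys: the paper's proof is maximally concise and leans entirely on the reference, which is defensible for a restated textbook result; your version is self-contained, makes explicit where the multinomial structure is actually needed beyond \cref{prop:banchingproperties}, and exposes that the corollary does not follow from the $L^2$ estimates alone, which is a point a careful reader of the paper's proof could reasonably question.
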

The corollary above additionally establishes almost-sure convergence for \cref{bain:conv_summary} part (iii).

\section{Detailed adaptive filtering equations \label{app:bkg:3}}

In addition to the main text and Supplementary Materials in \cite{gupta2020adaptive}, we restate technical details of an adaptive filtering framework using projective measurements. The equations in \cref{sec:nmqa} are now supplemented by additional technical details below. 

Following on from \cref{eqn:33,eqn:34,eqn:35} in the main text, a single measurement $Y_t^{(j)}$ is received pertaining to the label $j$ and iteration $t$. Using this new information, we now seek to estimate the elements of $X$ pertaining to locations $q_t$ in the neighbourhood of $j$. Letting $\nu^{(j,q_t)}$ be the separation distance between two points labeled by $j$ and $q_t$,  the quantity $\chi_t^{(j,q_t)}$ is the estimate of the field at $q_t$ due to the measurement at $j$, 
\begin{align}
\chi_t^{(j,q_t)} := (1 - \lambda_2^{\tau^{(q_t)}}) F_t^{(q_t)} + \lambda_2^{\tau^{(q_t)}} F_t^{(j)} \exp\left( - \frac{(\nu^{(j,q_t)})^2}{2( R_t^{(j)})^2} \right), \label{eqn:38} 
\end{align} for some $\lambda_2^{\tau^{(q_t)}} \in [0,1]$ (see below). The set of relevant points, $q_t$, which define the neighbourhood about $j$, is expressed as
\begin{align}
Q_t^{(j)} &:= \{ q_t | \nu^{(j,q_t)} < k_0 R_t^{(j)} \}, k_0 \geq 1, \label{eqn:36}\\
\forall q_t &\in \{1, 2, \hdots d\} \setminus \{j\} \label{eqn:37},
\end{align} where $k_0$ is an arbitrary fixed number greater than unity, and $Q_t^{(j)}$ grows or shrinks even if the same value of $j$ is revisited during filtering. 

In this manner, both the size of the neighborhood $Q_t^{(j)}$ generated about $j$ and the approximate field estimated at neighboring points $q_t$ is parameterized by a single value, $R_t^{(j)}$. Further, the estimated state uncertainty pertaining to $R_t^{(j)}$ at the end of each iteration $t$ is used to adaptively select a control action for the next iteration $t+1$, as discussed below.  %The extent to which information sharing affects the overall procedure can be regulated through the two free parameters, $\lambda_1 \in [0,1]$ and $\lambda_2 \in [0,1]$, where $\lambda_1 \to 0, \lambda_2\to 0$ switches off information sharing entirely.

\subsection{Adaptive control via $\beta_\alpha$-particle layer} 

The empirical variance of $\beta_\alpha$-particles can be used to compute the expected value of a Fano factor $C_t^{(j)}$ for each label $j$,
\begin{align}
	C_t^{(j)} = \mathbb{E}_{\alpha} \left[ \frac{1}{r_t^{(j, \alpha)}} \mathbb{V}_{\beta_\alpha}\left[{ x_t^{(j, \alpha, \beta_\alpha)}}\right] \right].\label{eqn:44}
\end{align} In the above, the empirical variance is computed with respect to the $\beta_\alpha$-particles for each $\alpha$ and the expected Fano factor is computed using the empirical mean of the $\alpha$-particles. The resulting quantity $C_t^{(k)}$ for $k=1, 2, \hdots d$ qubit locations is used in the control step, where the next physical measurement at $t+1$ is scheduled at the coordinate point for the label $j'$ associated with maximal uncertainty for map gradients,
\begin{align}
	j'=\mathrm{argmax}_k \{C_t^{(k)}\}_{k=1}^d. \label{eqn:45}
\end{align} 

The high level algorithmic structure is summarized by \cref{fig:NMQA_detailed} and outlined as pseudocode in \cref{algorithm:nmqa}, combining particle filtering (steps (i)-(x)), a control decision (step (xi)), and information-sharing (step (xii)). As with all particle filters, the efficacy of our adaptive particle filtering algorithm depends on both particle weight calculations and branching mechanisms, discussed below.

\begin{figure*}[t!]
 \centering
	\includegraphics[width=\textwidth]{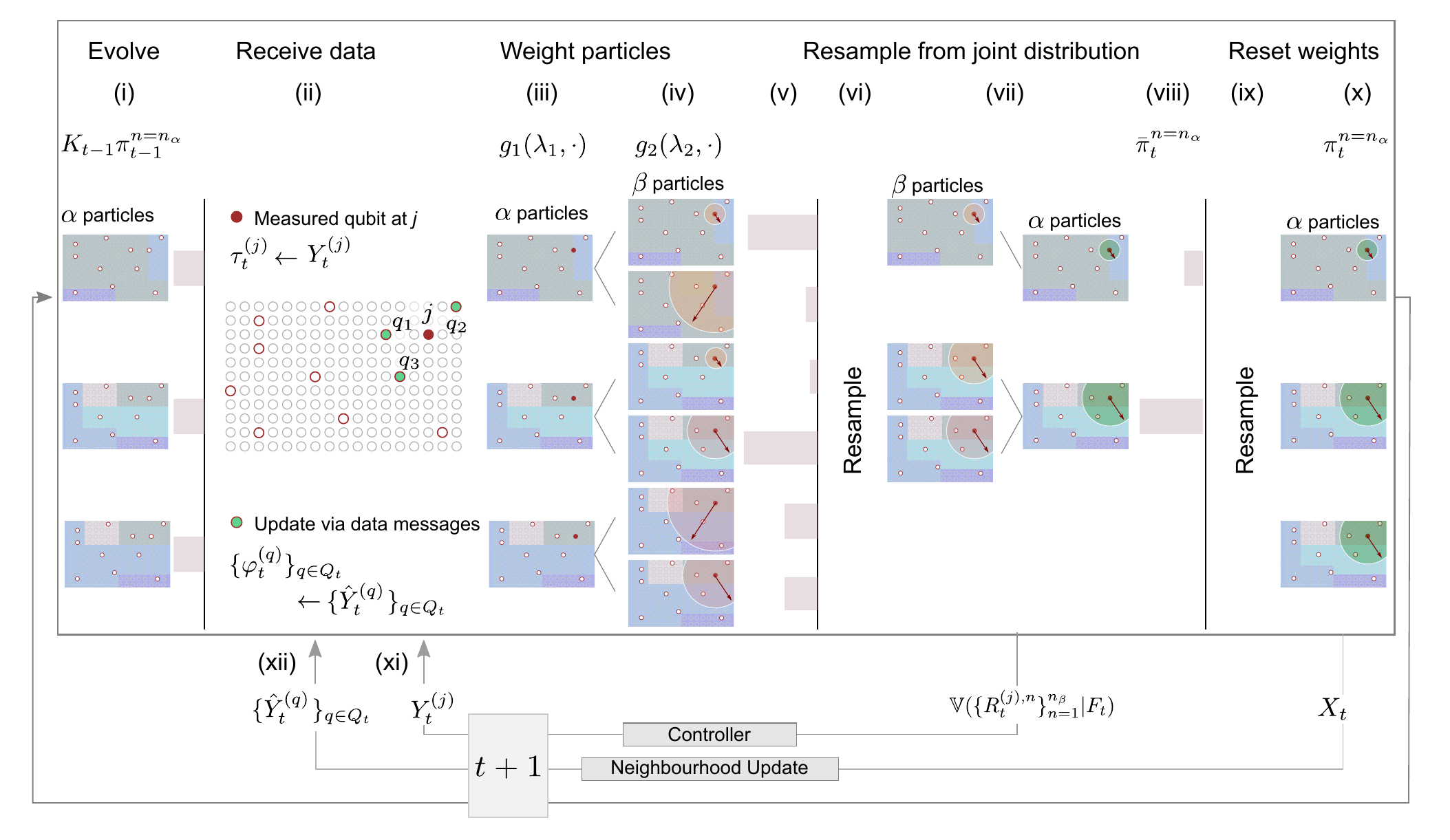}
	\caption[Schematic overview of adaptive particle filter at $t$.]{\label{fig:NMQA_detailed} Particle filter under an iterative likelihood approximation at $t$. $\alpha$-particles carry map information and $\beta$-particles represent length-scale information, $R_t^{(j)}$, at a measured location $j$; $n_\alpha = 3$, $n_\beta = 2$. (i) The posterior from $t-1$ is dynamically evolved to $t$ via $K_{t-1}$ yielding $p_t^{n_\alpha}$. (ii) A physical measurement at $j$ for iteration $t$. Shared data messages from $t-1$ are updated via $\varphi_t^{(q)}$ for $q\in Q_{t-1}$ the posterior neighborhood at $t-1$. (iii)-(iv) Proposed likelihood functions score individual particles. (iii) Each $\alpha$-particle is scored using $g_1^{(y_{t}^{(j)}, \alpha)}(\lambda_1, \Lambda_t^{(j, \alpha)}) $. (iv) Assuming $F_{t}$ for each $\alpha$-particle is known, a $\beta$-particle is scored via $g_2^{(j, \alpha, \beta_\alpha)}(\lambda_2, \Lambda_t)$. (v) The global scores $g_1^{(y_{t}^{(j)}, \alpha)}(\lambda_1, \Lambda_t^{(j, \alpha)}) g_2^{(j, \alpha, \beta_\alpha)}(\lambda_2, \Lambda_t)$ are computed for $n_\alpha n_\beta$ particles. (vi) Particles are re-sampled, resulting in new offspring particles depicted as leaves in (vii). (vii) The empirical mean and variance estimates of $R_t^{(j)}$ are stored for the $\beta$-layer of each $\alpha$ particle; the posterior estimate is passed onto the controller in (xi). (viii) The $\beta$-layer is collapsed for each $\alpha$-particle giving rise to the weighted empirical distribution, $\bar{\pi}_t^{n_\alpha}$. (ix) A second re-sampling step occurs with $n_\alpha$ particles using the empirical distribution $\bar{\pi}_t^{n_\alpha}$ in (viii). (x) Output of (ix) results in the final posterior distribution $\pi_t^{n_\alpha}$ at $t$; the mean of this distribution reflects the best information about $X_t$ conditioned on data, and this is shared with neighboring qubits in the posterior neighborhood in (xii). (xi) Controller picks the location for the next measurement where estimated variance metric is highest. (xii) Data messages are generated for the neighbors of $j$ in the posterior neighborhood using \cref{eqn:40}.}
\end{figure*}

\subsection{Particle weight calculations}

 We outline how $(\alpha,\beta_\alpha)$ particle-pairs are scored using a global scoring function. This function incorporates both physical single-qubit measurements and shared data-messages, and we distinguish it from the conventional usage of the word likelihood function in Bayesian analysis that is associated with the density of measurement noise only. If a measurement is performed at $j$ and received at iteration $t$, then the global scoring function for each particle pair is given by
\begin{align}
	g_{t}^{(y_{t}^{(j)}, \alpha, \beta_\alpha)}(\lambda_1, \lambda_2, \Lambda_t) := g_1^{(y_{t}^{(j)}, \alpha)}(\lambda_1, \Lambda_t^{(j, \alpha)}) g_2^{(j, \alpha, \beta_\alpha)}(\lambda_2, \Lambda_t). \label{eqn:46}
\end{align} In the above, the argument $ \Lambda_t^{(k, \alpha)}$ is a set of parameters involving calculations using physical measurements $Y_t^{(k)}$ and shared data messages $\hat{Y}_t^{(k)}$ at some qubit location $k\in \{1,2,\hdots d \}$. The global scoring function is composed of a product of two functions. Using the form in \cref{eqn:7,eqn:8}, the function $g_1^{(j, \alpha)}(\lambda_1, \Lambda_t^{(j, \alpha)})$ scores $\alpha$-particles following a measurement at $j$,
\begin{align}
	g_1^{(y_{t}^{(j)}, \alpha)}(\lambda_1, \Lambda_t^{(j, \alpha)}) & = \frac{\rhoq{0}}{2} \nonumber \\
	& + \rhoq{0} s(h(\lambda_1, \Lambda_t^{(j, \alpha)})) \left( \delta_{y_{t}^{(j)},1} - \delta_{y_{t}^{(j)},0}\right), \label{eqn:47}
\end{align} where $y_t^{(j)}$ is a instance of the observed physical measurement $Y_t^{(j)}$. The term $h(\cdot)$ is related to estimated information about $F_t^{(j)}$ carried by the parent $\alpha$-particle and it is discussed in detail below. Meanwhile, the function $g_2^{(j, \alpha, \beta_\alpha)}(\lambda_2, \Lambda_t)$ scores $\beta_\alpha$-particles for each $\alpha$ parent,
\begin{align}
	& g_2^{(j, \alpha, \beta_\alpha)}(\lambda_2, \Lambda_t) \nonumber \\
   &= \prod_{q_t \in Q_t^{(j)}} \frac{1}{k_1} \exp \left( \frac{-(h(\lambda_1, \Lambda_t^{(q_t, \alpha)}) - \chi_t^{(j, q_t, \alpha, \beta_\alpha)} - \mu_F)^2}{2\Sigma_F} \right). \label{eqn:48}
\end{align} The equation above represents a comparison of the best knowledge of the map at all neighbours with the value implied by smearing map in the neighborhood candidate, $r_t^{(j, \alpha, \beta_\alpha)}$. Here, the errors from approximating a continuously varying $F_t$ with a collection of overlapping Gaussians is assumed to be a truncated Gaussian with a mean $\mu_F$ and variance $\Sigma_F$. An integration constant $k_1$ arises from the state-space $\mathbb{S}_F$ which is assumed to be a finite interval on the positive real line in \cref{eqn:35}. The term $\chi_t^{(j, q_t, \alpha, \beta_\alpha)}$ is \cref{eqn:38} rewritten explicitly for particle calculations rather than the posterior estimates of the particle distribution,
\begin{align}
	\chi_t^{(j, q_t, \alpha, \beta_\alpha)} = & (1 - \lambda_2^{\tau_t^{(q_t)}}) h(\lambda_1, \Lambda_t^{(q_t, \alpha)}) \nonumber \\
 &+ \lambda_2^{\tau_t^{(q_t)}} h(\lambda_1, \Lambda_t^{(j, \alpha)}) \exp \left(\frac{(-\nu^{(j, q_t)})^2}{(r_t^{(j, \alpha, \beta_\alpha)})^2}\right). \label{eqn:49}
\end{align} In the equation above, as well as previous scoring calculations of this subsection, the term $h(\lambda_1, \Lambda_t^{(\cdot, \alpha)})$ features prominently and we now discuss this term. This data association mechanism $h(\lambda_1, \Lambda_t^{(\cdot, \alpha)})$ is a Markov function of physical measurements and data-messages at $t$ and it is given by the inverse $s^{-1}(\cdot)$, %and it incorporates the effect of data-messages for each $\alpha$- particle,
\begin{align}
	h(\lambda_1, \Lambda_t^{(k, \alpha)}) &:= s^{-1}\left(H(\lambda_1, \Lambda_t^{(k, \alpha)}) - \frac{1}{2}\right). \label{eqn:50}
\end{align} Here, $ H(\lambda_1, \Lambda_t^{(k, \alpha)})$ represents the process of extracting features from raw data to update the map hypothesis associated with the particle. The quantity $H(\lambda_1, \Lambda_t^{(k, \alpha)})$ are the empirical calculations for physical measurements ($ \kappa_t^{(j, \alpha)}$) and data messages ($\gamma_t^{(j, \alpha)}$) generated during the filtering procedure,
\begin{align}
H(\lambda_1, \Lambda_t^{(k, \alpha)}) & := \begin{cases} \left(1 - \frac{\lambda_1^{\tau_t^{(k)}}}{2}\right)\kappa_t^{(k, \alpha)} + \frac{\lambda_1^{\tau_t^{(k)}}}{2} \gamma_t^{(k, \alpha)}, \label{eqn:51} \\
	\quad \mathrm{for} \quad \tau_t^{(k)},\varphi_t^{(k)} > 0, \\
	\kappa_t^{(k, \alpha)}, \quad \tau_t^{(k)} \neq 0, \varphi_t^{(k)} = 0, \\
	\gamma_t^{(k, \alpha)}, \quad \tau_t^{(k)} = 0, \varphi_t^{(k)} \neq 0, \\
	\kappa_0^{(k, \alpha)}, \quad \tau_t^{(k)} = 0, \varphi_t^{(k)} = 0,
\end{cases}
\end{align} where $\Lambda_t^{(j, \alpha)}$ is the set of parameters,
\begin{align}
	\Lambda_t^{(j, \alpha)} &:= \{\tau_t^{(j)}, \varphi_t^{(j)}, \kappa_t^{(j, \alpha)}, \gamma_t^{(j, \alpha)}\}, \label{eqn:52}\\
	\Lambda_t: &= \{ \{ \Lambda_t^{(j, \alpha)} \}_{\alpha=1}^{\alpha=n_\alpha} \}_{j=1}^{j=d}. \label{eqn:53}
\end{align} In the above, $\tau_t^{(j)}$ ($ \varphi_t^{(j)}$) are counts of the number of measurements (data-messages) received at $j$. For a physical measurement at $j$ and data-message at $q_t$ for any labeled point $k = 1, 2, \hdots d $, the counting parameters in the set $\Lambda_t^{(k, \alpha)}$ are updated recursively, 
\begin{align}
	\tau_t^{(k)} &= \tau_{t-1}^{(k)} + \delta_{k,j}, && \tau_0^{(k)} =0, \label{eqn:54}\\
	\varphi_t^{(k)} &= \varphi_{t-1}^{(k)} + \delta_{k,q_t}, && \varphi_0^{(k)} =0. \label{eqn:55}
\end{align} Similarly, $ \kappa_t^{(j, \alpha)}$ ($\gamma_t^{(j, \alpha)}$) are Markov calculations which depend on the measurements (data-messages) at $j$. These calculations also depend on the particle index $\alpha$ via the initial particle sample. For general mapping problems where dynamics are known \emph{a priori} via the transition kernel $K_t$, then it is postulated that some learnable weights $ \{\{ \zeta_{t'}^{(k)} \}_{k=1}^d \}_{t'=0}^t $ and $ \{\{ \varsigma_{t'}^{(k)} \}_{k=1}^d \}_{t'=0}^t $ exist such that,
\begin{align}
\kappa_t^{(k, \alpha)} & = \kappa_{t-1}^{(k, \alpha)} + \zeta_t^{(k)} y_t^{(j)} \delta_{k,j} \label{eqn:56},\\
\gamma_t^{(k, \alpha)} & = \gamma_{t-1}^{(k, \alpha)} + \varsigma_t^{(k)} \hat{y}_t^{(j)} \delta_{k,j} \label{eqn:57},\\
\kappa_0^{(k, \alpha)} &= \mathcal{Q}(\frac{1}{2} + s(f_0^{(k, \alpha)})),\quad \gamma_0^{(k, \alpha)} = \kappa_0^{(k, \alpha)}. \label{eqn:58}
\end{align} Here the appropriate form of the weights are known \emph{a priori} or are assumed discoverable through additional dynamical inference, for example, using recursive least squares or Kalman filtering approaches. 

% \subsection{Particle branching}

% The manipulation of both particle layers in this manner implements the iterative maximum likelihood assumption that $F_t$ is known before estimating $R_t^{(j)}$. These features ensure that neighborhood discovery for $R_t$ can adapt to changes in information about $F_t$, and particle positions remain sufficiently diverse during filtering such that a few highly-weighted particles do not dominate the particle population, especially as $t$ increases. In particular, the $\beta_\alpha$-particles are discarded once information about $r_t^{(j, \alpha)}$ and $C_t^{(j)}$ have been obtained from $\beta_\alpha$ particle distributions using \cref{eqn:43,eqn:44}. These computational steps are summarized by \cref{algorithm:nmqa}, where the labels (i)-(xii) correspond to the steps schematically depicted in \cref{fig:NMQA_detailed}. 

Our framework has free model parameters $\lambda_1$ and $\lambda_2$, as well as noise parameters $\Sigma_v, \mu_F, \Sigma_F$. For the case $\lambda_1 = \lambda_2 = 0$, no information sharing occurs in our framework (see \cref{eqn:49,eqn:51}). This means that the data-association via $h(\cdot)$, the generation of $\beta_\alpha$ particles, and the control step have no impact on the overall inference procedure. In the opposite limit that $\lambda_1 \to 1$ and $\lambda_2 \to 1$, our framework treats data-messages on an equal footing with physical measurements. For non-zero $\lambda_1$ and $ \lambda_2$, our framework departs from the explicit scope of \cref{bain:conv_summary}. 

\begin{figure}[t]
	\includegraphics[scale=1.0]{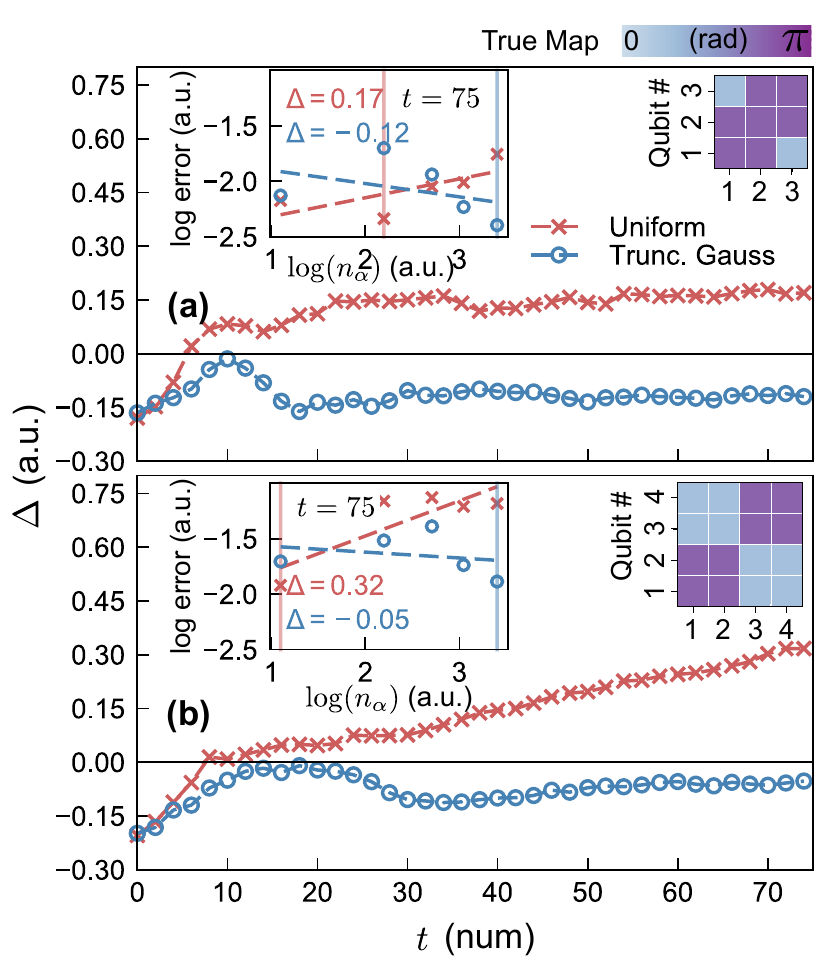}
	\caption{\label{fig_appendix_scaling} Error scaling behaviour for Uniform and Trunc. Gaussian. Rows represent increasing qubit system size $d=9, 16$ for 2D Square Field (right insets); with high and low qubit phase values of $0.25\pi, 0.75\pi$ radians depicted on heat-maps. (a)-(b) main panels depict $\Delta$ against $t$ for tuned NMQA parameters. $\Delta > 0$ for Uniform; $\Delta \in [-1, 0)$ for Trunc. Gaussian for $t \gg d$ agrees with \cref{bain:conv_summary}. Data for Uniform (crosses) and Trunc. Gaussian  (open circles). Left insets depict the log of the expected mean square map reconstruction error per qubit over 50 runs against the log of $n_\alpha$ number of $\alpha$-particles. From left to right, the $x$-axis shows increased particle number $n_\alpha = 3, 9, 15, 21, 30; n_\beta =  \frac{2}{3} n_\alpha$; for $t=75$. $\Delta$ is the gradient of the line of best fit (dashed lines). Vertical colored lines mark particle configurations yielding lowest empirical error for tuned parameters ($\Sigma_v, \Sigma_F, \lambda_1, \lambda_2$) for Uniform: (a) $(7.1e^{-7}, 0.05, 0.93, 0.68)$; (b) $(4.2e^{-3}, 2.6e^{-4}, 0.88, 0.72)$. Trunc. Gaussian: (a) $(6.3e^{-7}, 7.9e^{-7}, 0.95, 0.84)$; (b) $(4.2e^{-3}, 2.6e^{-4}, 0.93, 0.68)$.}
\end{figure}

\begin{figure}[t]
	\includegraphics[scale=1.0]{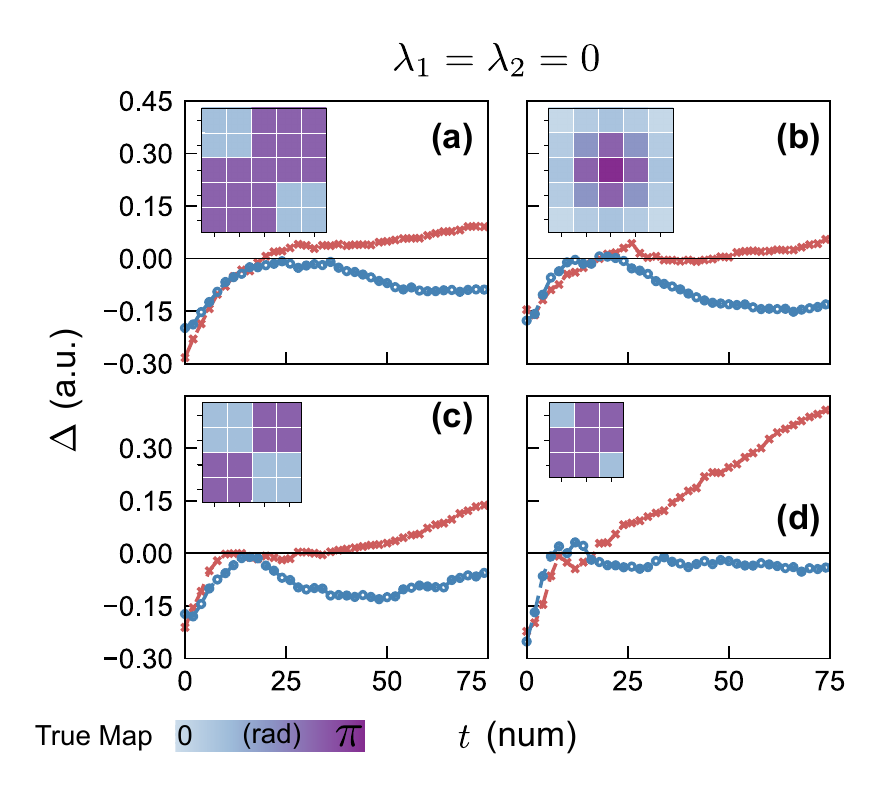}
	\caption{\label{fig:appendix:zerolambda} Error scaling behaviour for $\lambda_1=\lambda_2=0$. Panels represent different physical configurations (left insets): (a), (c), (d) 2D Square field for $d=25, 16, 9$ respectively; (b) 2D Gaussian field for $d=25$. (a)-(d) main panels depict $\Delta$ against $t$ for tuned NMQA parameters. Data for Uniform (crosses) and Trunc. Gaussian  (open circles); $\Delta > 0$ for Uniform; $\Delta \in [-1, 0)$ for Trunc. Gaussian for $t \gg d$ agrees with \cref{bain:conv_summary}. Tuned variance parameters ($\Sigma_v, \Sigma_F$) for Uniform: (a) $(7.1e^{-7}, 0.047)$; (b) $(5.9e^{-9}, 0.096)$; (c) $(4.2e^{-3}, 2.6e^{-4})$; (d) $(7.1e^{-7}, 0.047)$. Trunc. Gaussian: (a) $(8.9e^{-7}, 1.9e^{-9})$; (b) $(0.77, 4.6e^{-6})$; (c) $(4.2e^{-3}, 2.6e^{-4})$; (d) $(6.3e^{-7}, 7.9e^{-7})$. $n_\alpha = 3, 9, 15, 21, 30; n_\beta =  \frac{2}{3} n_\alpha$. High and low true field values of $0.25\pi, 0.75\pi$ radians.}
\end{figure} 

\subsection{Additional numerical analysis}

For the numerical analysis in the main text, $ \kappa_t^{(j, \alpha)}$ and $\gamma_t^{(j, \alpha)}$ are interpreted as the recursive forms of the empirical means of the physical measurements and data messages if the map $F_t$ is static in $t$. This yields uniform weights
\begin{align}
	\zeta_{t'}^{(k)} &= \frac{1}{\tau_t^{(k)}}, && \forall t' \in 0, 1, \hdots t, \label{eqn:73}\\
	\varsigma_{t'}^{(k)} &= \frac{1}{\varphi_t^{(k)}}, && \forall t' \in 0, 1, \hdots t. \label{eqn:74}
\end{align} To complete the specification of state spaces and the likelihood functions, substituting $F_{\min}=0, F_{\max}=\pi$ in \cref{eqn:35} gives the explicit form of $k_1$ in \cref{eqn:48},
\begin{align}
	k_1 = \frac{1}{2}\left(\erf(\frac{\pi + \mu_F}{\sqrt{2\Sigma_F}}) + \erf(\frac{\pi - \mu_F}{\sqrt{2\Sigma_F}})\right). \label{eqn:72}
\end{align} Meanwhile, the state-space of $R_t$ is set by size of physical hardware, where $R_{\min}$ is the minimal pairwise qubit separation and $R_{\max}$ is set to be some multiple of the maximum pairwise qubit separation on the device, in units of distance. 

We will present additional numerical evidence to support the conclusions of the main text. First our analysis focuses on the error scaling behaviour for NMQA with system size in 2D for a Square field. In \cref{fig_appendix_scaling}, we plot the error scaling factor  for particle number, $\Delta$, against number of iterations, $t$, for the two system sizes $d=9, 16$, where $d=25$ corresponds to \cref{fig_data_scaling}(b) in the main text. The behaviour of $\Delta$ also depicts features at $t \approx d$ suggesting two different regimes of behaviour apply, i.e.  $t<d$ (sparse data) and $t \gg d$ (high data limit). The data for Uniform (Trunc. Gaussian) is given in red crosses (blue circles). Our results agree with our theoretical expectations set by \cref{bain:conv_summary}: in the high $t$ regime, $\Delta >0$ for Uniform and the condition $\Delta \in [-1, 0)$  is satisfied for Trunc. Gaussian. 
 \FloatBarrier
In \cref{fig:appendix:zerolambda}, we provide scaling behaviour information for the case $\lambda_1=\lambda_2=0$ for all cases discussed in the main text and this appendix. Here, we have turned off the sharing mechanism in NMQA by setting $\lambda_1, \lambda_2$ to zero; where this choice leads to larger true expected mean square error per qubit than picking non-zero $\lambda_1, \lambda_2$. Nevertheless, despite the sub-optimal choice of setting $\lambda_1=\lambda_2=0$, this is not a difficult regime to analyse theoretically. We expect  \cref{bain:conv_summary} to hold under Trunc. Gaussian for $\lambda_1=\lambda_2=0$. Numerically, we confirm that  $\Delta \in [-1, 0)$  is indeed satisfied for Trunc. Gaussian for $t > d$ in a variety of physical 2D configurations in  \cref{fig:appendix:zerolambda}(a)-(d). The expectation that $\Delta >0$ for  Uniform for $t>d$ is additionally satisfied. Similar results hold in 1D. 

Collectively, these numerical results provide additional support for the conclusions of the main text.

\end{document}